\documentclass[a4paper, 11pt]{article} 

\usepackage[utf8]{inputenc}
\usepackage[T1]{fontenc}
\usepackage{amsmath}
\usepackage{amssymb}
\usepackage{amsthm} 
\usepackage{tikz-cd}
\usepackage{enumerate} 

\usepackage{hyperref} 
\usepackage{enumerate} 
\usepackage{graphicx}  
\usepackage{subcaption}
\usepackage{bm}
\usepackage{url}

\usepackage{booktabs}

\usepackage[maxbibnames=6]{biblatex}
\addbibresource{biblio.bib}

\usepackage{algorithm2e}
\RestyleAlgo{ruled}

\usepackage{amsfonts}

\newcommand{\overbar}[1]{\mkern 1.5mu\overline{\mkern-1.5mu#1\mkern-1.5mu}\mkern 1.5mu}

\newtheorem{thm}{Theorem}

\newtheorem{lem}{Lemma}

\newtheorem{prop}{Proposition}

\DeclareMathOperator{\amin}{Amin}
\DeclareMathOperator{\conv}{Conv}
\DeclareMathOperator{\ner}{Nrv}
\DeclareMathOperator{\geo}{Geo}

\DeclareMathOperator{\Max}{Max}
\DeclareMathOperator{\ver}{Ver}

\newcommand{\vect}[1]{\boldsymbol{\mathbf{#1}}} 
\newcommand{\m}{\mathcal}
\newcommand{\outd}{{\textnormal{out}}}
\newcommand{\ind}{{\textnormal{in}}}


\usepackage{soul}

\usepackage{comment}
\usepackage{geometry}
\geometry{margin=1in}

\usepackage{authblk}  

\title{Circular Directional Flow Decomposition of Networks}

\author{Marc Homs-Dones, Robert S. MacKay, Bazil Sansom, and Yijie Zhou}

\date{\today}

\begin{document}

\maketitle

\begin{abstract}
We introduce the Circular Directional Flow Decomposition (CDFD), a new framework for analyzing circularity in weighted directed networks. CDFD separates flow into two components: a circular (divergence-free) component and an acyclic component that carries all nett directional flow. This yields a normalized circularity index between 0 (fully acyclic) and 1 (for networks formed solely by the superposition of cycles), with the complement measuring directionality. This index captures the proportion of flow involved in cycles, and admits a range of interpretations --- such as system closure, feedback, weighted strong connectivity, structural redundancy, or inefficiency. Although the decomposition is generally non-unique, we show that the set of all decompositions forms a well-structured geometric space with favourable topological properties. Within this space, we highlight two benchmark decompositions aligned with distinct analytical goals: the maximum circularity solution, which minimizes nett flow, and the Balanced Flow Forwarding (BFF) solution, a unique, locally computable decomposition that distributes circular flow across all feasible cycles in proportion to the original network structure. We demonstrate the interpretive value and computational tractability of both decompositions on synthetic and empirical networks. They outperform existing circularity metrics in detecting meaningful structural variation. The decomposition also enables structural analysis --- such as mapping the distribution of cyclic flow --- and supports practical applications that require explicit flow allocation or routing, including multilateral netting and efficient transport.
\end{abstract}

\medskip

\noindent\textbf{Keywords:} directed acyclic graph, circular, balanced, cycle, weighted directed network

\section{Introduction}

Many complex systems can be naturally represented as networks, where nodes represent system components, and edges capture weighted interactions between them. Very often, these interactions are inherently directional:~a flow from $A$ to $B$ (such as a payment, message, or resource transfer) is fundamentally distinct from a flow in the opposite direction. Such relationships are represented by weighted directed edges, which encode the magnitude and direction of flow between nodes \cite{Bang_2009}. Yet beyond this local structure, a deeper question arises:~how much of the system’s overall flow follows directed, source-to-sink pathways, and how much participates in closed-loop structure within the network \cite{how_directed, Johnson_2020}? The presence and extent of circularity in the sense of closed-loop flow, is a structural property with diverse meanings and implications across a broad array of different domains.

High levels of circular flow are often associated with \textit{efficiency}. In ecosystems, mature systems retain and recycle nutrients more effectively, reducing reliance on external inputs and minimizing losses \cite{Odum, Vitousek, Lajtha_2020}.\footnote{This internal cycling is often disrupted by disturbance or degradation \cite{Bormann_1968, Göttlein_2023}.} In logistics, efficient round-trip routing reduces ``empty loads'', improving capacity use and lowering costs \cite{Fisher_1995}. In payment systems, recirculating liquidity reduces reliance on external funding by increasing the velocity of money \cite{Hayakawa_2020, Byck_2020, Duca-radu_2021}. Similarly, circular economy models in production and manufacturing promote material efficiency by closing loops through reuse and recycling \cite{Layton_2016, Loon_2023}.

In other contexts, circularity can signal \textit{redundancy} or \textit{inefficiency}. In over-the-counter derivatives markets, cycles of exposures add counterparty-risk without affecting nett market positions, and can be eliminated through multilateral netting \cite{DErrico_2021}.\footnote{``Netting'' means reduction of obligations in a network by selecting a cycle of obligations and an amount up to the minimum of the obligations in the cycle and then subtracting that amount from each of the obligations in the cycle; the resulting obligations achieve the same nett obligations.} In transport and communication systems, unnecessary cycles increase costs and resource use --- such as duplicated messages in peer-to-peer networks that consume bandwidth and create processing bottlenecks \cite{Zhu_2008}. In biological systems, futile metabolic cycles occur when enzymes drive opposing reactions simultaneously, leading to energy dissipation without productive output \cite{Katz_1978}, offering targets for metabolic engineering to improve, e.g.~energy and carbon-use efficiency \cite{Amthor_2019, Garcia_2023,Barratt_2009,Cairns_2004,Morandini_2003}.

Circular flow can also embed \textit{positive feedback}, enabling self-reinforcing dynamics, or serve as a conduit for \textit{connectivity}, allowing information, stress, or contagion to propagate through a system. In finance, tightly coupled balance sheets mean circular links can both amplify and relay shocks across institutions \cite{Bardoscia_2017, Sansom_2021, Silva_2017}. In epidemiology, feedback loops in contact networks help sustain endemic disease states  \cite{Drysdale_2025,Klaise_2016}. 

Conversely, circular structure can enhance \textit{stability} and \textit{regulation} by embedding negative feedback or buffering external fluctuations. Ecological nutrient cycles, for instance, may dampen perturbations in resource inputs \cite{Loreau_1994}. The same “detour-and-release” logic appears in metabolic regulation, where biochemical cycles temporarily sequester flux and then release it to prevent concentration spikes and keep downstream reactions within operational ranges \cite{Misselbeck_2019, Qian_2006, Tolla_2015}. In computer science an analogous mechanism is exploited by routing protocols deliberately sending packets into detour cycles, smoothing burst traffic and extending mesh-network lifetime \cite{Liao_2023}. In systems with intransitive (cyclic) competition, no single strategy can dominate all others (the rock-paper-scissors pattern). Instead, the cycle of local victories yields dynamic coexistence and maintains biodiversity in microbial, plant, and animal communities \cite{Sinervo_1996, Reichenbach_2007, Allesina_2011}.\footnote{The degree of intransitivity often determines whether a community converges to monoculture or persists in perpetual motion \cite{Gallien_2017}, and similar cyclic dominance motifs have been documented in economics, politics, and social choice theory \cite{Klimenko_2015, Strang_2022}.}

Circular connectivity may also reflect \textit{community} or \textit{structural embedding} --- such as economic integration \cite{Sada_2021, Kichikawa2019}, self-containment in localised production-consumption loops \cite{Braun_2021}, or economic embeddedness within mutual credit systems \cite{Mattsson_2023, Sardex, Kelly_2014}.

Given circularity’s importance across such diverse domains, a range of methods have been taken to identify and quantify circularity in directed networks. These include topological approaches, such as counting elementary cycles or analysing their length distributions \cite{Mattsson_2023}, and spectral methods based on properties of the adjacency matrix, Laplacian or modified Laplacian \cite{Haruna_2016,Kichikawa2019,Finn_1976,Iskrzyński_2021,Layton_2017}.

However, existing approaches suffer from several limitations. Some are not genuine measures of circular structure, relying on vector field decompositions \cite{Haruna_2016,Kichikawa2019} that violate flow conservation or fail to respect network constraints such as edge capacities and direction. Others ignore edge weights, as in topological cycle counts \cite{Mattsson_2023} or certain unweighted spectral methods \cite{Layton_2017}. Some methods reduce circularity to a single scalar index, offering no information about how circular flow is distributed across the network \cite{Finn_1976, Layton_2017}. Flow decomposition-based approaches provide a natural framework for capturing circularity not only at the overall network level, but also individual flows. However existing formulations suffer from conceptual and methodological shortcomings:~some violate basic physical constraints such as flow conservation or capacity limits \cite{Haruna_2016, Kichikawa2019}, while others produce only a single, somewhat arbitrary decomposition, without acknowledging the underlying space of feasible decompositions, or offering principled criteria for selecting among them \cite{Ulanowicz_1983}. In some cases, computational complexity and scalability constraints also significantly limit the applicability of these methods \cite{Ulanowicz_1983, Mattsson_2023}.

In this paper, we introduce a framework for analyzing circularity in weighted directed networks, based on the decomposition of flow into two complementary components:~a circular component with zero nett flow and an acyclic component that accounts for all directional (nett) flow across the network. This decomposition respects flow conservation, adheres to edge capacity constraints, and operates directly in the native units of flow --- whether dollars, carbon atoms, passenger journeys, or otherwise. The proportion of total flow allocated to the circular component provides a natural, dimensionless measure of circularity, ranging from 0 (fully directional) to 1 (pure circulation).

Crucially, the CDFD is not unique. We show that the set of all valid decompositions forms a well-structured solution space:~a contractible polytope complex capturing every feasible way to partition flow between cyclic and acyclic components. 

Within this space, we identify two decompositions of particular interest. The \textit{Maximum Circularity Decomposition} identifies the solution with the largest possible circular component (equivalently, the smallest nett directional flow). The \textit{Balanced Flow Forwarding (BFF) Decomposition}, is a uniquely defined solution that allocates circular flow proportionally, based on local structure and information.

We argue the relevance of these decompositions across a wide range of applications. They are computationally tractable even for large-scale networks and support both structural analysis, such as quantifying circularity or identifying feedback-rich sub-networks; and constructive tasks where flow must be explicitly allocated. Practical use cases include multilateral netting and routing problems, where the local nature of the BFF decomposition could allow distributed or decentralized implementation. Empirical comparisons with other measures highlight the advantages of our framework in capturing structurally meaningful and interpretable measure of circularity.

It is also important to highlight that, whilst in this paper we largely focus on the circular part of the decompositions, one can analogously focus on the directional part of the decomposition. Taking 1 minus the circularity provides a natural measure of ``directedness'' that may be valuable in contexts where the asymmetry or irreversibility of flows carries structural or functional significance. 

The paper is organised as follows. We introduce the framework of circular directional flow decompositions in section~\ref{sec:intro_decomposition}. We motivate it and compare it with other quantifications of circularity in section~\ref{sec:motivation_comparison}. Our main results on the decomposition space are given in section~\ref{sec:decomp_space}, with proofs in a set of appendices. In section~\ref{sec:choosing_rep}, we highlight two particular choices of decomposition. Results of testing our methods on example networks, both simulated and real datasets, are given in section~\ref{sec:numerics}. The paper ends with conclusions in section~\ref{sec:conclusions}.

\section{Framework}
\label{sec:intro_decomposition}

We consider weighted directed networks (also known as weighted digraphs) with finite set of nodes $\mathcal N$ and set of directed edges $\mathcal E$. We suppose there is at most one edge from node $i$ to node $j$,  denoted by $ij$, which carries a weight $w_{ij}>0$. We record all the weights in the weighted adjacency matrix, $\mathbf w$, where $w_{ij} = 0$  indicates the absence of the edge $ij$. We will identify weighted adjacency matrices with the network they represent as they carry the same information. Self-loops $ii$ are permitted and multiple edges from $i$ to $j$ can be amalgamated into a single edge by adding their weights.   Note that edges from $j$ to $i$ are kept distinct from those from $i$ to $j$.
For unweighted datasets we simply set all weights to one. 
For each node $i$ we define its in-weight and out-weight by 
\begin{equation*}
 w_{i}^\ind: = \sum_{j\in \mathcal N_{\mathbf w}}w_{ji} 
\hspace{1cm}\textnormal{and}\hspace{1cm}
 w_{i}^\outd:=\sum_{j\in \mathcal N_{\mathbf w}}w_{ij}.
\end{equation*}

Before presenting the circular plus directional flow decomposition (CDFD) we rigorously define the individual terms. We will use the term flow interchangeably with network, unlike some authors in computer science who impose stronger restrictions on flows, as it gives good intuition in our context. A directed cycle in a network, or for us simply a \emph{cycle},  is a finite sequence of distinct nodes (except the two ends that must coincide) with consecutive nodes  connected by following directed edges.
A network is acyclic, also known in the literature as DAG (directed acyclic graph),  if it does not contain any cycles. We will refer to acyclic networks as \emph{directional}, as their flows have a global direction. More precisely they can be characterised by having a node ordering for which all edge flows go strictly downwards, see appendix~\ref{app:minimu_cost_equivalence}. 
At the other extreme, we say a network is \emph{circular} (also known as balanced) if for each node the in- and out-weight coincide.  A special case is a \emph{balanced cycle} consisting of a directed cycle with equal weights on its edges. We note balanced cycles are the natural generalisation of cycles in a weighted flow context as they do not contain bottlenecks.   Any circular network can be expressed as the sum of balanced cycles \cite{cycle_decompositions} (although in general there will be many such decompositions) 
justifying the terminology.

Given a network $\mathbf w$, we say that the pair $(\mathbf c, \mathbf d)$ is a CDFD of $\mathbf w$ if $\mathbf c$ is circular, $\mathbf d$ is directional, and both are subnetworks of $\mathbf w$ such that their sum equals $\mathbf w$. Formally, this means $0\leq c_{ij}, d_{ij}\leq w_{ij}$ for all $ij\in \m E_{\mathbf w}$, and $\mathbf c +\mathbf d = \mathbf w$.
Then, we define the circularity  of the decomposition to be simply the ratio between the total weight in the circular part and in the original network, that is,  
\begin{equation}
    \frac{\sum_{ij\in \mathcal E_{\mathbf c}}c_{ij}}{\sum_{ij\in \mathcal E_{\mathbf w}}w_{ij}},
    \label{eq:circularity_ratio}
\end{equation}
which ranges from 0 to 1. The directionality is defined analogously and its sum with circularity is 1 as  $\mathbf c + \mathbf d = \mathbf w$. Thus it does not give extra information and we will focus on the circularity in this work. However, it is important to keep in mind that in some applications the directionality may be of more interest. Furthermore, while developing the theory it will be convenient to concentrate on  the directional part of CDFDs.

\begin{figure}[ht]
    \centering
\includegraphics[width=0.85\textwidth]{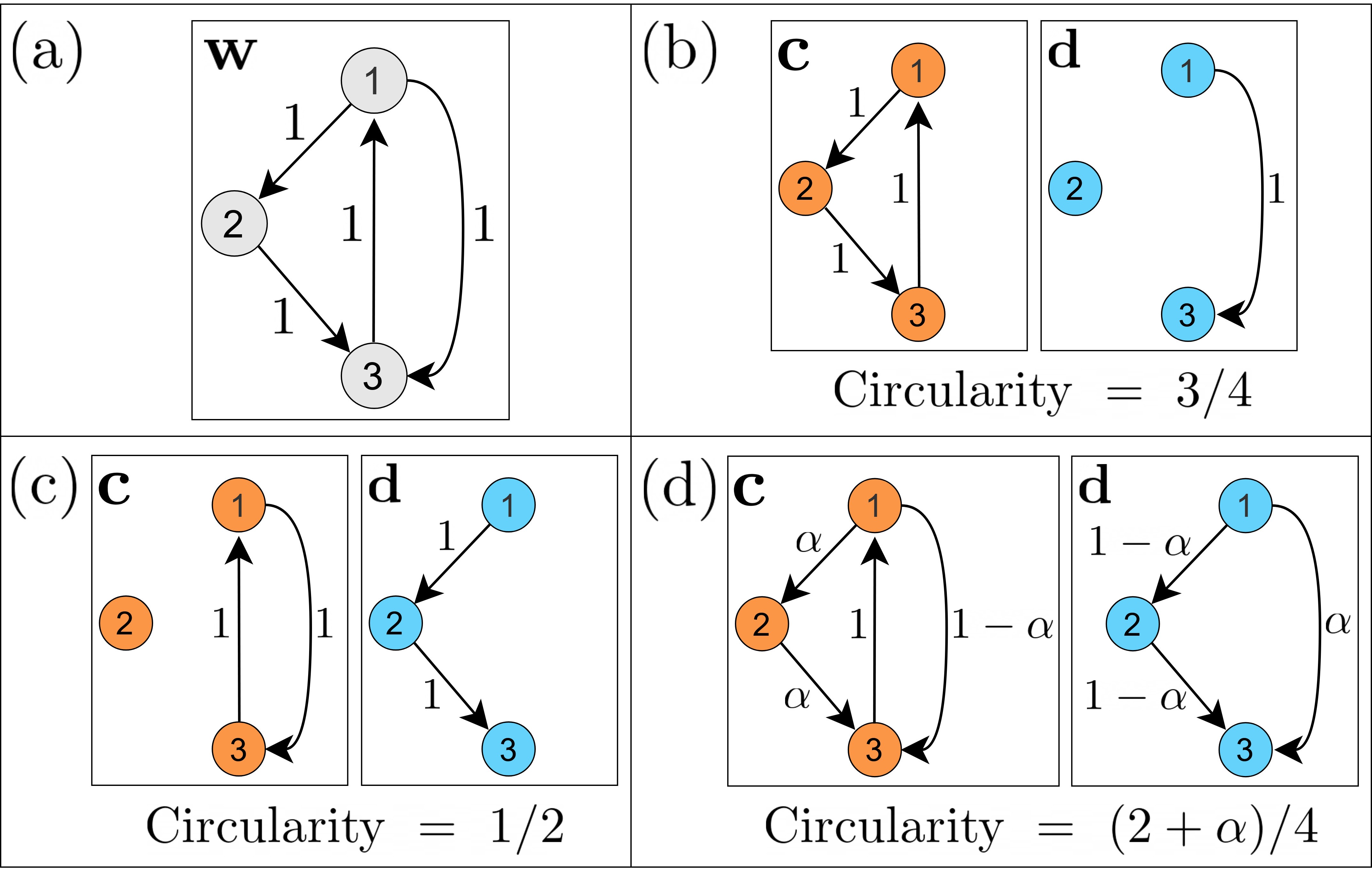}
\caption{Different decompositions of the network depicted in (a). In (d) we have  $0\leq \alpha \leq 1$ so that all weights are non-negative. }
    \label{fig:first_eg}
\end{figure}

Note that we have defined circularity with respect to a particular decomposition, as the decomposition is generally not unique. For example, consider the network shown in figure \ref{fig:first_eg}(a). If we take the circular part to be the left (respectively, right) balanced cycle with maximum capacity, we obtain the decomposition in figure \ref{fig:first_eg}(b) (respectively, figure \ref{fig:first_eg}(c)). The circularity values for these decompositions differ, since the left cycle is longer and thus carries more total weight. Now consider a convex combination of the two decompositions in (b) and (c), as illustrated in figure \ref{fig:first_eg}(d). The circular part is no longer a single balanced cycle but rather the sum of two balanced cycles with edge weights $\alpha$ and $1-\alpha$ respectively. Notably, the remaining flow is acyclic, as seen on the right of figure \ref{fig:first_eg}(d), where all edge flows are directed downward. Thus, for this example, we have infinitely many decompositions, ranging between the extremes of figure~\ref{fig:first_eg}(b) and (c).

A crucial concept in our work is the  \emph{decomposition space} of a network $\mathbf w$, which is simply the set of all its possible decompositions endowed with the subspace topology in $\mathbb R^{|\mathcal E_{\mathbf w}|}$.
We can then summarise the previous paragraph by saying that the decomposition space of figure~\ref{fig:first_eg}(a) is a segment between the decompositions in figure~\ref{fig:first_eg}(b) and (c).  
It would be cumbersome to always keep track of both the circular and directional part of decompositions, so in the upcoming sections we will work with the directional\footnote{We keep track of $\mathbf d$ instead of $\mathbf c$ despite being mainly interested in circularity, as $\mathbf d$ is directly related to the minimum cost flow problem, see section~\ref{sec:minimum_cost}.}
part, and the circular one can be recovered by $\mathbf c = \mathbf w -\mathbf d$. We will slightly abuse notation and say that $\mathbf d$ is a directional part of $\mathbf w$,
and refer to 
\[D=\left \{\mathbf d \; | \; (\mathbf w-\mathbf d,\mathbf d) \textnormal{ is a CDFD of }\mathbf w \right \}.\] 
as the decomposition space. The \emph{circularity range} is then defined as the set of circularity values \eqref{eq:circularity_ratio} of all these decompositions.  

Note that the extreme  case of the circularity range containing 0 or 1 can only occur when $\mathbf w$ is already directional or circular. 
What is more, if $\mathbf w$ is directional its circular part must be null so  $D = \{ \mathbf w \}$ and the only value in the circularity range is 0. Analogously, if $\mathbf w$ is circular $\mathbf d = \mathbf w - \mathbf c$ is also circular, so $\mathbf d = \mathbf 0$ and $D = \{ \mathbf 0 \}$ with circularity 1.

\section{Comparison with other circularity measures}
\label{sec:motivation_comparison}

At first glance the introduction of the decomposition space to measure circularity may seem contrived. However, it arises naturally when considering the possible balanced flows on a network $\mathbf w$ where the weights $w_{ij}$ represent the maximum capacity from $i$ to $j$. To see this we will first consider other possible definitions of circularity.

One intuitive way to quantify circularity in a directed network is to count the number of directed cycles, an idea adopted by several recent studies such as \cite{Mattsson_2023, Sardex, Harang_2020}. Despite its apparent simplicity, this approach has some significant limitations. 

First, raw cycle counts are hard to interpret. To assess whether a network has ``many'' or ``few'' cycles, a suitable null model is needed --- yet defining such a model is often nontrivial. Moreover, the number of cycles in a network can increase super-exponentially with the number of nodes \cite{Dominguez_2014}, so implementations typically restrict attention to short cycles (e.g., of length $\leq$ 4 or 5) or a handful of predefined motifs \cite{Milo_2002}. This truncation cannot yield a true global circularity measure.

Also importantly, this method ignores edge weights, offering no insight into the volume or intensity of flow. While cycle counts may suggest topological potential for circularity --- such as the richness of feedback paths --- they say little about actual flow patterns.\footnote{For instance, whilst \cite{Harang_2020} are very clear that “strength of connections in cycles” is crucial for the neural dynamics they study, they lack a measure to formalise that strength into a weighted cyclicity index, and rely on a cycle-counting based cyclicity index.} In systems where flow is conserved and constrained (such as metabolic, financial, transport, or data networks), circularity depends not just on the existence of cycles but on how much flow actually moves, or could move, through them. 
A network may have many cycles yet be functionally dominated by directional (acyclic) flow, see figure~\ref{fig:fake_circularity}(b) for a minimal example. Conversely, even a network with few cycles can exhibit high circularity if a large share of total flow recirculates. A meaningful measure of circularity must therefore account for both the presence of cycles and the distribution and magnitude of flow.

One natural extension is to sum the flow that could circulate in each cycle, considering capacity constraints. This captures flow intensity --- but overestimates circularity when cycles share edges. For example, in figure \ref{fig:first_eg}(a), two balanced cycles overlap on one edge, so they cannot both circulate at full capacity without exceeding that edge's limit. Naively summing their flow would double-count. In systems governed by conservation laws, such double-counting is not physically meaningful. 
A proper measure must therefore allocate edge capacities consistently across overlapping cycles.

This calls for a cycle-aware flow decomposition, which explicitly tracks how shared edges are used by different cycles. Our proposed CDFD does exactly this:~it separates total flow into circular and acyclic components, respecting both flow conservation and edge-level capacity constraints (section \ref{sec:intro_decomposition}). 

Once such a decomposition is obtained, a natural measure of circularity is the ratio of circular to total flow (\ref{eq:circularity_ratio}) ($0$ for an acyclic network and $1$ for a pure circulation). This measure is physically grounded and interpretable, requiring no null model. It allows for meaningful comparisons across networks and domains. Moreover, since the decomposition respects conserved quantities, its components are directly interpretable (e.g., ``30\% of flow is circular, equivalent to \$12 million'', or ``10,000 passenger journeys'').

This CDFD measure captures the total volume of flow participating in circular structures --- a “how much water is in the pipes” perspective.\footnote{Assuming the weights represent flow rates in the pipes, and all the pipes have the same length.}
An alternative is to define a circular flow rate, where the circular component is decomposed into balanced elementary cycles and one edge-weight is counted per cycle. This approach would be invariant to edge-splitting and treat short and long cycles equally, which may be desirable in certain contexts. However, it depends not only on the initial CDFD but also on the specific choice of cycle decomposition of the circular component. Questions of normalization also arise --- though natural candidates exist. We leave these extensions for future work.

Below we discuss several widely used circularity metrics, highlighting how they differ from our CDFD approach in their assumptions, computational demands, and interpretability.

The network Helmholtz–Hodge decomposition (HHD) splits any network flow into a diverg\-ence‑free (circular) and a gradient (directional) component \cite{Johnson_2013,Strang_2020} with respect to an inner product that is usually taken to be the standard one or the one weighted by the reciprocals of the flows, though there is a huge space of alternatives. The share of circular flow (between 0 and 1) has been proposed \cite{Haruna_2016} and widely used as a “circularity” score, and the decomposition has been used for structural analyses of circulation 
in a range of settings from production‑network community detection \cite{Kichikawa2019} to brain‑graph comparison \cite{Anand_2024}.  The non-uniqueness coming from the choice of inner product is rarely mentioned, but is a serious issue, making a whole space of HHDs.  For us, this makes it reasonable to consider the entire space of CDFDs.

 As shown by MacKay et al.~\cite{how_directed}, however, rather than circularity this score in fact measures trophic incoherence:~the departure from a perfectly layered hierarchy \cite{Johnson_2014,how_directed}. A fully circular flow scores 1, but purely acyclic networks can still register positive values (see figure \ref{fig:fake_circularity}(a)) because the metric is driven by hierarchical disorder, so not only cycles but also feed-forward structure contribute to incoherence, and only a perfectly layered hierarchy achieves a score of 0. Hence the measure provides only a proxy for circularity, which is unable to distinguish between low levels of circularity vs feed-forward structure. 

HHD also reorients edges and ignores capacities, creating fictitious loops even in DAGs (see figure \ref{fig:fake_circularity}(a)). Such reversals are unacceptable in applications where flow direction and conservation are essential, such as financial netting and gridlock resolution \cite{Bech_2002,DErrico_2021}, or metabolic and biomass flux analysis \cite{Chapman_2017,Kritz_2010,Iskrzyński_2021}.

\begin{figure}[ht]
    \centering
\includegraphics[width=0.85\textwidth]{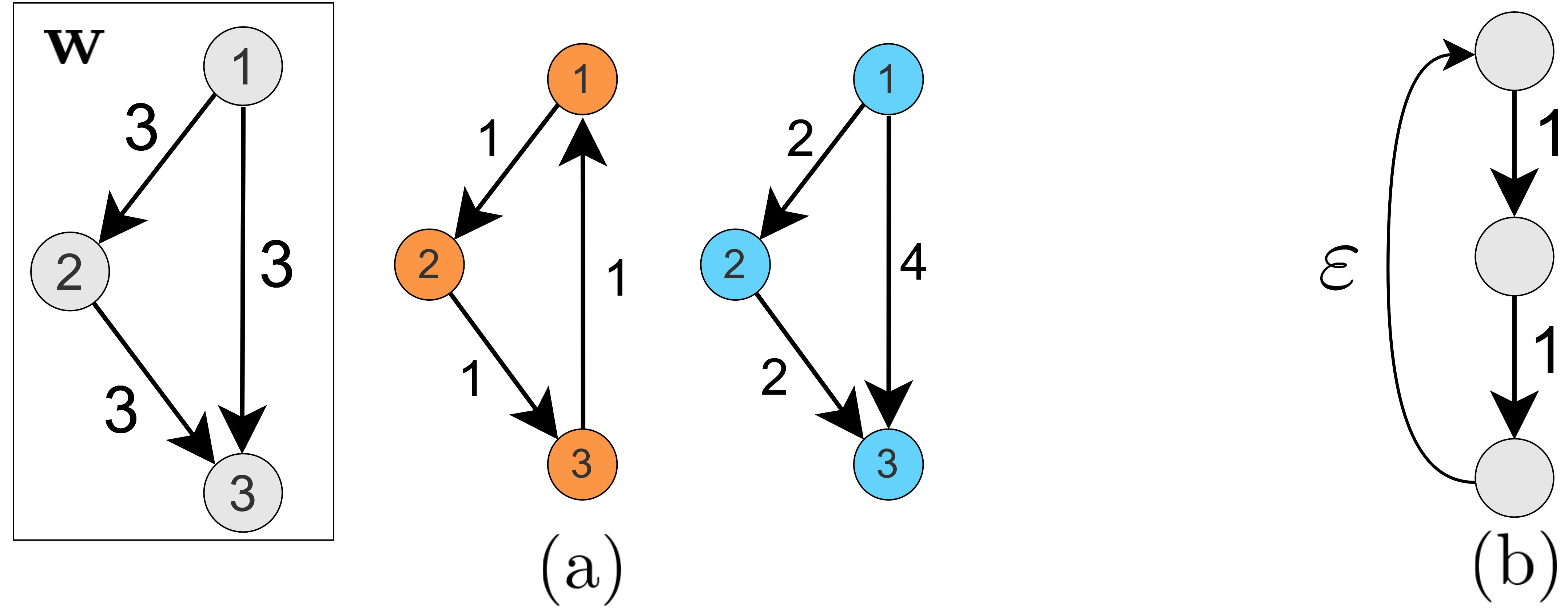}
\caption{Examples of problematic networks for other measures of circularity. In (a)  an HHD decomposition of an acyclic feed-forward motif $\mathbf w$ is shown. As HHD can reverse edge direction, the circular part is non-null and the HHD circularity is non-zero. In (b), the network is topologically very circular but is dominated by downwards flow when $\varepsilon$ is small. LM circularity incorrectly identifies it as fully circular.  }
    \label{fig:fake_circularity}
\end{figure}

Unlike HHD, our CDFD satisfies the natural baseline, 0 for any acyclic network, 1 for a pure cycle. Moreover, because the CDFD respects flow conservation and edge limits, it provides a sound foundation for structural analysis, as well as for applied flow allocation problems (such as netting, routing, rebalancing) where the raw HHD provides no practical utility.

A variant of HHD was introduced in \cite{Moutsinas2021}, which distinguishes between upstream and downstream directedness and thereby gives two notions of circularity.  But it suffers from the same problems as HHD.

A more straightforward method than HHD and CDFD is simply to compute the fraction of total flow carried by all edges that participate in a cycle, giving a 0--1 score we will call LM circularity\footnote{After Luo and Magee \cite{Luo_2011}} (the complement of this measure has been proposed for quantifying the extent to which local flows follow an overall ``underlying direction'' \cite{Luo_2011}). 
The simplicity of this measure is appealing, but it treats all flow on edges participating in cycles as circular, even if most of it runs one‑way from source to sink, so the metric systematically inflates circularity (see figure \ref{fig:fake_circularity}(b)).

In strongly‑connected graphs, where every edge lies on a cycle, LM circularity is always 1, no matter how directional the actual flows may be. Even in partly connected graphs, adding weight to a single edge in a cycle will increase LM circularity, even if it increases nett flow on the cycle. Consequently the measure can signal high circularity in networks with little true circulation, making it an unreliable indicator.

Closely related to our work, Ulanowicz \cite{Ulanowicz_1983} introduced a method to decompose steady-state ecological networks into two components:~a purely cyclic component (closed feedback loops) and a residual acyclic component representing once-through “trophic” flow. This approach has also been applied outside of ecology in other domains, such in the analysis of metabolic networks \cite{Kritz_2010, Chapman_2017} and trade networks \cite{Iskrzyński_2021}. Although Ulanowicz’s original aim was structural interpretation, a scalar circularity measure (the ratio of circular to total flow) can be readily derived from this decomposition. In fact Ulanowicz’s scheme yields a CDFD --- hereafter the Ulanowicz decomposition. 

The Ulanowicz Decomposition is defined through a heuristic procedure:~all simple cycles are first enumerated. Iteratively, the minimum‑weight arc is chosen; and its flow is removed from every cycle that uses it, using a heuristic “circulation‐probability” split. Two arbitrary choices therefore shape the outcome: (a) the order in which arcs are selected, and (b) the heuristic for sharing flow among overlapping cycles. Complete cycle enumeration also makes the algorithm scale poorly, curbing its use, e.g.~in genome‑scale metabolic models despite early promise \cite{Kritz_2010}.

Once we view the Ulanowicz method as giving one point in a larger CDFD solution space, its ad‑hoc choices lack a principled rationale. The alternative decompositions we introduce (Section \ref{sec:choosing_rep}) avoid full cycle enumeration, run in polynomial time, and are tailored to specific analytic goals --- whether quantifying circularity, exposing feedback structure, or optimising netting or routing. Ulanowicz’s “circulation probabilities” resemble our Balanced‑Flow Forwarding (BFF) concept (Section \ref{sec:BFF}), but BFF is derived from global flow dynamics, requires no edge ordering, and is far more efficient.

Another approach, widely used in ecology and other fields,\footnote{Some examples are \cite{Allesina_2004,Iskrzyński_2021,Braun_2021,Loon_2023,Jonge_2022,Layton_2016,Kazanci_2023}.} is the Finn Cycling Index (FCI) \cite{Finn_1976,Finn_1980} which reports the fraction of total throughput that “loops back” into the same nodes. Cast the flow network as an absorbing Markov chain whose transition matrix is the normalised flow matrix $ G $. The diagonal of $\mathbf (I-G)^{-1}$ gives the expected number of visits to each node --- including returns --- by a “random walker” before exiting.\footnote{Where $I$ is the identity matrix} FCI uses this to compute the share of global throughput accounted for by such revisitations.

Whilst intuitive, the FCI is path‑blind:~it says nothing about which edges carry the recurrent flow. It also flattens out as a system approaches closure, the measure loses discrimination, and $\mathbf (I-G)$ becomes ill‑conditioned. Our CDFD instead partitions each edge into directional and cyclic components, and works the same for open systems or closed steady states. For many applications the two indices can be complementary.

We note that spectral approaches are also taken to quantifying circularity, with some studies using the max eigenvalue of the unweighted adjacency matrix \cite{Layton_2016,Layton_2017,Fath_2007,Morris_2021,Borrett_2007}. Of course this measure is path-blind (no decomposition), weight blind (purely topological), but also insensitive to all variation in circularity outside the leading strongly connected component (SCC), unbounded, and its meaning is sensitive to network size and mean-degree, making raw values not in general meaningfully comparable across networks. Riane \cite{Riane_2025} extends the raw‑eigenvalue idea by adding weights and a node‑level decomposition but inherits the core spectral‑radius limitations.

We leave numerical comparisons between different definitions of circularity for section~\ref{sec:numerics_comparison}.

\section{Analysis of the decomposition space}
\label{sec:decomp_space}

\subsection{Minimum cost flow formulation}
\label{sec:minimum_cost}

When thinking of the network as a flow, it is natural to consider the nett flow that each node receives, i.e.~the difference between the in- and out-weights. As the circular part of a decomposition is balanced, the nett flow for all nodes is null, so all the nett flow from $\mathbf w$ has to be in the directional part. More precisely, for a network  $\mathbf x$ to be the directional part of $\mathbf w$ it is needed that 
\begin{equation} x_{i}^{\textnormal{in}}- x_{i}^{\textnormal{out}}= w_{i}^{\textnormal{in}}- w_{i}^{\textnormal{out}}\hspace{0.5cm}\forall i\in\mathcal N_{\mathbf w}, \hspace{1cm}\textnormal{and}\hspace{1cm}0\leq x_{ij}\leq w_{ij} \hspace{0.5cm} \forall ij\in \mathcal E_{\mathbf w},
\label{eq:x_constrains}
\end{equation}
which guarantees that $\mathbf x$ is a subnetwork of $\mathbf w$ and  $\mathbf w-\mathbf x$ is circular. 
On top of this constraint, we need to check that $\mathbf x$ does not contain cycles. 
One approach is to require $\mathbf x$ to be minimal over the flows that ``transport'' the nett flows of $\mathbf w$ between nodes, as then it will not have redundant cyclic flows. This can be achieved by simply minimising the total flow of $\mathbf x$, or more generally, setting edge  costs $\kappa_{ij}>0$ and minimising the total cost, that is,
\begin{equation}\min_{\mathbf x \textnormal{ in \eqref{eq:x_constrains}} }\sum_{ij\in\mathcal E_{\mathbf w}}\kappa_{ij}x_{ij}.
\label{eq:min_x}
\end{equation}

Linear optimisation problems with constraints \eqref{eq:x_constrains} and objective function \eqref{eq:min_x}, are known as minimum cost flow problems.  We have shown that for positive costs, which we will assume from now on and abbreviate by $\vect \kappa>0$, their solutions are directional parts of $\mathbf w$. 
Crucially, a reciprocal statement also holds, which will be key for further exploration of the decomposition space. 
\begin{thm}
    Any solution of \eqref{eq:min_x} is a directional part of $\mathbf w$, and any directional part of $\mathbf w$ can be attained as the solution of \eqref{eq:min_x} for appropriate costs. That is,
    \begin{equation}
    D = \bigcup_{\vect \kappa>  0} \operatorname*{arg\,min}_{\mathbf x \textnormal{ in \eqref{eq:x_constrains}}} \sum_{ij\in \mathcal E_{\mathbf w}}\kappa_{ij}x_{ij}.
    \label{eq:arg_min=D}
    \end{equation}
    \label{thm:CDFD_eqi_min_flow}
\end{thm}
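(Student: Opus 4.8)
The inclusion $\supseteq$ of \eqref{eq:arg_min=D} is already in hand from the discussion preceding the statement: a minimiser of \eqref{eq:min_x} is feasible for \eqref{eq:x_constrains}, and it cannot support a cycle, since subtracting $\varepsilon=\min_{ij\in\mathrm{cycle}}x_{ij}>0$ around that cycle preserves \eqref{eq:x_constrains} while strictly lowering the (strictly positive-cost) objective. So the plan is to concentrate entirely on the reverse inclusion: given $\mathbf d\in D$, I want to exhibit costs $\vect\kappa>0$ for which $\mathbf d$ solves \eqref{eq:min_x}.

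The idea is to certify optimality of $\mathbf d$ by hand, via node potentials, rather than invoking general linear-programming duality. Because $\mathbf d$ is directional, its support $S=\mathcal E_{\mathbf d}=\{ij\in\mathcal E_{\mathbf w}:d_{ij}>0\}$ is acyclic, so $(\mathcal N_{\mathbf w},S)$ admits a topological order; I would fix integer node potentials $\pi_i$ strictly increasing along every edge of $S$, so that $\pi_j-\pi_i\ge 1$ for $ij\in S$. Then I set $\kappa_{ij}=\pi_j-\pi_i$ on $S$ and $\kappa_{ij}$ equal to any value exceeding $\max(0,\pi_j-\pi_i)$ off $S$ (for instance $|\pi_j-\pi_i|+1$). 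This makes $\vect\kappa>0$, with $\kappa_{ij}\ge\pi_j-\pi_i$ everywhere and equality precisely on the edges that $\mathbf d$ uses.

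The verification I expect to be routine. For any $\mathbf x$ feasible in \eqref{eq:x_constrains}, the difference $\mathbf z=\mathbf x-\mathbf d$ has zero nett flow at every node (subtract the two instances of \eqref{eq:x_constrains}), so $\sum_{ij}(\pi_j-\pi_i)z_{ij}$ telescopes to $\sum_i\pi_i(z_i^{\mathrm{in}}-z_i^{\mathrm{out}})=0$; hence $\sum_{ij}\kappa_{ij}x_{ij}-\sum_{ij}\kappa_{ij}d_{ij}=\sum_{ij}\kappa_{ij}z_{ij}=\sum_{ij\notin S}\bigl(\kappa_{ij}-(\pi_j-\pi_i)\bigr)z_{ij}$, and each surviving term is nonnegative because off $S$ we have $d_{ij}=0$, hence $z_{ij}=x_{ij}\ge 0$, while the prefactor is positive. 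Thus $\mathbf d$ minimises \eqref{eq:min_x} for these costs, giving $D\subseteq\bigcup_{\vect\kappa>0}\operatorname*{arg\,min}$. The one point needing care — the closest thing to an obstacle — is that edges of $\mathbf w$ outside the support of $\mathbf d$ need not run ``downhill'' in the chosen order; inflating their cost to dominate $\pi_j-\pi_i$ is exactly what neutralises them in the sum. If one additionally wanted the minimiser to be unique, one would perturb $\vect\kappa$ within $S$ to make the reduced costs strictly signed on saturated and empty edges, but this is not required for the set identity \eqref{eq:arg_min=D}.
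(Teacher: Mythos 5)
Your proposal is correct, and its core construction coincides with the paper's: the forward inclusion is the same cycle-cancellation argument, and for the reverse inclusion both proofs take a topological order of the acyclic support of $\mathbf d$, set $\kappa_{ij}$ equal to the potential difference on edges carrying flow in $\mathbf d$, and inflate the costs of the remaining edges so that their reduced cost is nonnegative. The one genuine difference is how optimality of $\mathbf d$ is then certified: the paper invokes the reduced-cost optimality conditions for minimum cost flow (the lemma quoted from \cite{Ahuja_book} in appendix~\ref{app:minimu_cost_equivalence}), whereas you verify optimality from scratch via the telescoping identity $\sum_{ij}(\pi_j-\pi_i)z_{ij}=\sum_i\pi_i\bigl(z_i^{\mathrm{in}}-z_i^{\mathrm{out}}\bigr)=0$ for the circulation $\mathbf z=\mathbf x-\mathbf d$, leaving only the off-support terms, where $z_{ij}=x_{ij}\ge 0$ meets a strictly positive prefactor. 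Your version is therefore self-contained --- it amounts to reproving the sufficiency half of the reduced-cost conditions in the one case needed --- at the price of a few extra lines, while the paper's is shorter but imports a textbook theorem; both arguments are sound, and your handling of the one delicate point (edges of $\mathbf w$ outside the support of $\mathbf d$ need not respect the topological order, so their costs must be inflated) matches the paper's.
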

We prove this in appendix~\ref{app:minimu_cost_equivalence}.

\subsection{Topology of decomposition space}
\label{sec:topology_decomposition}

First, let us explore the decomposition space of figure~\ref{fig:first_eg}(a) using theorem~\ref{thm:CDFD_eqi_min_flow}. When choosing costs $\vect \kappa$ with
\begin{equation}
    \kappa_{12}+\kappa_{23}>\kappa_{13}
    \label{eq:first_eg_inequality}
\end{equation}
the solution to \eqref{eq:min_x} gives rise to figure~\ref{fig:first_eg}(b), as the downwards path on the right is cheaper than the one on the left, whereas figure~\ref{fig:first_eg}(c) is attained if the inequality is flipped.
If the costs give an equality in \eqref{eq:first_eg_inequality}, then all the decompositions are optimal as both downward paths have the same total cost. Then the argument-minimum in \eqref{eq:arg_min=D} for a specific cost gives us the whole decomposition space and thus $D$ is convex, see appendix~\ref{app:background_polytope} for background.
However, for some networks no single cost matrix gives rise to the whole decomposition space, see appendix~\ref{app:non-convex} for an example, in which case the decomposition space will not be convex.  
 
A priori, a non-convex decomposition space could have wildly different types of decompositions with no decomposition bridging the gap between them. Even worse, we might fear that there are networks with disconnected circularity ranges, which would be extremely counterintuitive and undesirable. Fortunately, neither of these occurs as the decomposition space is connected and thus the circularity range is an interval. 

Intuitively, $D$ is connected because  $\{\vect \kappa > 0\}$ is connected and then theorem~\ref{thm:CDFD_eqi_min_flow} allows us to make a path between any pair of decompositions. In fact, theorem~\ref{thm:CDFD_eqi_min_flow} allows us to push much stronger topological properties from $\{\vect \kappa > 0\}$ to $D$ as is presented in appendix~\ref{app:decomposition_space}. They can be summarised by
\begin{thm}
    The decomposition space is a contractible polytope complex. 
\end{thm}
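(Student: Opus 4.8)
Write $P\subseteq\mathbb R^{\mathcal E_{\mathbf w}}$ for the polytope of flows satisfying~\eqref{eq:x_constrains}; it is nonempty (it contains $\mathbf w$) and bounded. For any cost vector $\boldsymbol\kappa$ the minimiser set $\operatorname*{arg\,min}_{\mathbf y\in P}\boldsymbol\kappa\cdot\mathbf y$ is a face of $P$, so Theorem~\ref{thm:CDFD_eqi_min_flow} exhibits $D$ as a union of faces of $P$. The first task is to upgrade this to ``$D$ is a polytope complex'', for which it suffices to show $D$ is closed under passing to subfaces. Given $F=\operatorname*{arg\,min}_{\mathbf y\in P}\boldsymbol\kappa\cdot\mathbf y$ with $\boldsymbol\kappa>0$ and a face $F'\subseteq F$, use that every face of a polytope is exposed to pick a linear functional $\boldsymbol\mu$ with $F'=\operatorname*{arg\,min}_{\mathbf y\in F}\boldsymbol\mu\cdot\mathbf y$; a short argument from the upper semicontinuity of $\operatorname*{arg\,min}$ then shows that for all small $\varepsilon>0$ the still strictly positive cost $\boldsymbol\kappa+\varepsilon\boldsymbol\mu$ satisfies $\operatorname*{arg\,min}_{\mathbf y\in P}(\boldsymbol\kappa+\varepsilon\boldsymbol\mu)\cdot\mathbf y=F'$, so $F'\in D$. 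Since the intersection of two faces of $P$ is automatically a common subface, $D$ is a polytope complex.

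For contractibility the plan is to transport it from the cost side, as the surrounding text anticipates: $K=\{\boldsymbol\kappa\in\mathbb R^{\mathcal E_{\mathbf w}}:\kappa_{ij}>0\ \forall\, ij\}$ is an open convex cone, hence contractible. The bridge is the normal fan of $P$. Write $\mathcal N(F)=\{\boldsymbol\kappa:F\subseteq\operatorname*{arg\,min}_{\mathbf y\in P}\boldsymbol\kappa\cdot\mathbf y\}$ for the normal cone of a face $F$; the relative interiors $\ri\mathcal N(F)$ partition $\mathbb R^{\mathcal E_{\mathbf w}}$, and $\operatorname*{arg\,min}_{\mathbf y\in P}\boldsymbol\kappa\cdot\mathbf y=F$ exactly when $\boldsymbol\kappa\in\ri\mathcal N(F)$. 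By Theorem~\ref{thm:CDFD_eqi_min_flow} and the perturbation step above, the faces whose relative-interior normal cones meet $K$ are precisely those in $D$. Hence $K$ is covered by the convex (so contractible) sets $\mathcal N(\mathbf v)\cap K$ as $\mathbf v$ ranges over the vertices of $P$ that lie in $D$, and all nonempty intersections of these are again convex; the nerve theorem (after the routine thickening of convex sets to an open good cover with the same nerve) gives that $K$ is homotopy equivalent to the nerve of this cover, whose faces are the sets of $D$-vertices lying in a common face of $D$. This nerve must then be identified, up to homotopy, with the barycentric subdivision of $D$, which is homeomorphic to $|D|$. Chaining the equivalences, $|D|\simeq K\simeq\mathrm{pt}$; and since $D$ is a finite polytope complex, hence a CW complex, this is genuine (not merely weak) contractibility.

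The innocuous-looking perturbation lemma is load-bearing: it is what makes the normal-fan strata of $K$ indexed exactly by $D$, and in the first part it is what forces closure under subfaces. The genuine care goes into the comparison with $K$, because $K$ is a \emph{non-compact} open cone whose natural cover/stratification is made of unbounded convex cones truncated by the boundary of the positive orthant rather than of honest cells. A clean way to organise this would be via the graph $\Gamma=\{(\boldsymbol\kappa,\mathbf x):\boldsymbol\kappa\in K,\ \mathbf x\in\operatorname*{arg\,min}_{\mathbf y\in P}\boldsymbol\kappa\cdot\mathbf y\}$: the projection $\Gamma\to K$ is proper with contractible fibres, hence a homotopy equivalence (by the theory of cell-like maps, $\Gamma$ and $K$ being ANRs), so $\Gamma$ is contractible; but the projection $\Gamma\to D$ is \emph{not} proper --- its fibres are the unbounded sets $\mathcal N(\mathbf x)\cap K$ --- so deducing $D\simeq\Gamma$ still takes an argument. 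I expect this last identification (equivalently, checking the nerve-theorem hypotheses and matching that nerve with the barycentric subdivision of $D$, or first replacing $K$ by a suitable compact model) to be where the appendix does its real work; the polytope-complex assertion is routine by comparison.
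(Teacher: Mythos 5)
Your proposal follows essentially the same route as the paper's appendix: a perturbation\-/of\-/costs argument showing that the collection $\mathcal D=\{\amin(\vect\kappa)\mid\vect\kappa>0\}$ is closed under passing to subfaces (the paper exposes a face $F\subset D$ by taking the convex combination $\varepsilon\vect\kappa_1+(1-\varepsilon)\vect\kappa_2$ of a possibly non\-/positive functional exposing $F$ with a positive cost whose optimal face contains $F$; your $\vect\kappa+\varepsilon\vect\mu$ is the mirror image of this and works equally well), followed by a nerve\-/lemma transport of contractibility from the convex cost region, covered by the normal cones of the vertices, over to $D$. The two points you flag as load\-/bearing are indeed where the appendix does its work, and your proposed fixes are broadly the right ones, with one correction. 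For the non\-/compactness of $K=\{\vect\kappa>0\}$, the paper takes your ``compact model'' option: it replaces $K$ by a box $\{\varepsilon\le\vect\kappa\le m\}$ chosen large enough that every cell of $\mathcal D$ is still realised as $\amin(\vect\kappa)$ for some $\vect\kappa$ in the box, so the sets $f(v)=\{\varepsilon\le\vect\kappa\le m\mid v\in\amin(\vect\kappa)\}$ are compact convex and the nerve lemma applies directly. (Your thickening alternative is less routine than you suggest: two normal cones whose common face meets only the boundary of the positive orthant give disjoint but asymptotically close sets $\mathcal N(v_1)\cap K$ and $\mathcal N(v_2)\cap K$, so an open $\varepsilon$\-/neighbourhood cover can acquire extra nerve simplices.)

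The one substantive misstatement is the identification of the resulting nerve. The nerve of the vertex\-/cone cover is \emph{not} the barycentric subdivision of $D$; it is the abstract simplicial complex $\m A=\{S\subset\ver\m D\mid S\subset\ver F\text{ for some }F\in\m D\}$, i.e.\ each cell of $\m D$ replaced by the full simplex on its vertex set (a square cell becomes a $3$\-/simplex, not nine triangles). Showing $\m A\simeq D$ is a separate lemma in the paper, proved by a second application of the nerve lemma: cover $D$ by its maximal cells and $\geo\m A$ by its maximal simplices; because cells of a polytope complex intersect in common faces, two maximal cells meet iff they share a vertex, so the two nerves are simplicially isomorphic and both covers consist of compact convex sets. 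With that lemma supplied, your chain $|D|\simeq\m A\simeq\ner(\cdot)\simeq\{\varepsilon\le\vect\kappa\le m\}\simeq\{*\}$ closes, and finiteness of the complex upgrades weak to genuine contractibility as you say.
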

A contractible space is one that can be continuously deformed within itself to a point. In particular, it is connected and it does not contain ``holes'' of any dimension. Polytopes are geometric objects that generalise convex polygons to higher dimensions. Namely they are the convex hull of finitely many vertices in an affine space, for example segments, rectangles and pyramids. A polytope complex then is a collection of polytopes that intersect nicely. That is, their intersections are lower dimensional polytopes, such as vertex or segments.  For instance for the network presented in appendix~\ref{app:non-convex} the decomposition space consists of two segments joined at a vertex. A rigorous presentation of all these concepts can be found in appendix~\ref{app:background}.

\section{Choosing a decomposition }
\label{sec:choosing_rep}

In principle, all decompositions found are equally valid as they correctly identify circulating flow in the network.
Thus the interval of circularity values we obtain is simply a feature of the network, where each value represents different ways to view the flow in it. 
However, sometimes it is impractical to work with ranges and a specific value needs to be chosen.

In the following sections we propose two decompositions  for which we can then compute the circularity ratio by \eqref{eq:circularity_ratio}. The first method finds the maximum possible circularity in the given network, so it can also be used to find the upper end point of the circularity range. The second is an algorithm called balanced flow forwarding (BFF) that locally fairly distributes the circularity between different edges. With this we aim to find a ``middle'' point in the decomposition space, which may represent better a  typical circular part. Thus, we propose using BFF as the generic ``go to measure'' of circularity. However, specific contexts may suit the maximal one better (see section \ref{sec:choosing_rep} on the applied relevance of BFF vs.~max circularity) or even an entirely different  decomposition in the decomposition space.

\subsection{Maximal circularity}
\label{sec:maximum}

The maximal circularity of a network $\mathbf w$ is attained by finding a\footnote{Although such a subnetwork may not be unique, the circularity of all corresponding decompositions will coincide. When the specific decomposition is of interest, as in the upcoming section \ref{sec:compression}, the BFF decomposition has the advantage of being unique. }      balanced subnetwork $\mathbf c$ that maximises \eqref{eq:circularity_ratio}, or equivalently that   minimises  $\sum_{ij\in \mathcal E_{\mathbf d}}d_{ij}$ where $\mathbf d = \mathbf w- \mathbf c$.
This is precisely what the minimum cost problem \eqref{eq:min_x} does if we set all costs to 1. The  measure of directedness attained in this manner has already been proposed in the literature in a hierarchical approach \cite{Gupte_2011}, where attention was restricted to unweighted networks. This choice of decomposition is quite canonical, but it may give a skewed view of the circularity of the network, being an extreme. More precisely, the maximal circularity is achieved  at a vertex of the decomposition space (when it is achieved at a unique point), making it a quite specific type of decomposition. Moreover it is an end point of the circularity range, making it even more particular.

At the other end, one could consider minimal circularity, which together with the maximal would exhibit the whole range of circularity values. We will not consider this decomposition here as computing it is NP-hard, even for sparse unweighted networks, see appendix~\ref{app:minimum_circularity}. In contrast, there are many general purpose minimum cost flow algorithms,  suiting  different types of networks,
that run in polynomial time \cite[Page 395]{Ahuja_book}. 
For instance, the enhanced capacity scaling algorithm runs in $O((m \log n)(m+n\log n) )$ where $n =|\mathcal N_{\mathbf w}|$ and $m = |\mathcal E_{\mathbf w}| $. 
In our computations, see section~\ref{sec:numerics}, these algorithms performed better than our current implementation of BFF, which could incentivise the use of maximal circularity.

\subsection{Balanced Flow Forwarding (BFF)}
\label{sec:BFF}

The optimization approach seen in the previous section finds an extreme of the decomposition space. 
However, one may argue that a good representative of the decomposition space would be precisely the opposite, some sort of middle of it.  In this section we propose such a representative by having the flow in the circular part utilise all possible cycles, while locally attempting to preserve the proportions dictated by the weights on $\mathbf w$.  
 
Consider dynamics on the network where each node forwards the total flow received in the previous time step while preserving the flow proportion between edges. To avoid overflowing  edge capacities, we also limit the total forwarded flow in each node to the one forwarded in the previous time step. More formally, denote by $a_{ij}(t)$ the flow sent through edge $ij$ a time $t$, with starting flow $\mathbf a (0) = \mathbf w$. By preserving the flow proportions\footnote{By induction, keeping the proportions from the previous time step or keeping the initial proportions turn out to be equivalent.  } we mean that  
\begin{equation}
    \frac{a_{ij}(t)}{a^{\outd}_i(t)} = \frac{w_{ij}}{w_i^\outd} =: \bar{w}_{ij}
    \label{eq:ratio}
\end{equation}
for nodes with $a_{i}^\outd (t) \not =0$, or equivalently,
\begin{equation*}
    a_{ij}(t) =  a^{\outd}_i(t) \bar { w}_{ij}.
\end{equation*}
Then, the row out-weight vector $\mathbf {a}^\outd (t)$ completely determines the dynamics. The in-weights are given by
\begin{equation}
    a_i ^\ind (t)= \sum_{j\in\mathcal N_{\mathbf w}}   a_{ji}(t) = \sum_{j\in \mathcal N_{\mathbf w}}   a^{\outd}_j(t) \bar {w}_{ji} = [\mathbf {a}^\outd (t)\bar {\mathbf w}]_i ,
    \label{eq:a^in}
\end{equation}
where the last expression is a matrix multiplication. The updating rule for the out-weight flow is then  
\begin{equation}
    a^\outd_i (t+1) = \min(a^\ind_i (t), a^\outd_i (t)) = \min([\mathbf a^\outd (t) \bar {\mathbf w}]_i , a^\outd_i (t)).
    \label{eq:a(t+1)}
\end{equation} 
As $t \to \infty$, these dynamics converge to a balanced subnetwork, $\tilde{ \mathbf c}$, which also preserves the flow proportions. These properties follow from the fact that $\tilde {\mathbf c}^\outd$ is just a scaling of a stationary distribution of random walks in $\mathbf w$. We expand on this in appendix~\ref{app:BFF} where we show
\begin{thm}
    The vector $\tilde{\mathbf c}^\outd$ is the maximal invariant vector of $\bar{\mathbf w}$ bounded by $\mathbf w^{\outd}$. That is, $ \tilde{\mathbf c}^\outd \bar{\mathbf w} = \tilde{\mathbf c}^\outd $ and if $\mathbf  x\leq \mathbf w^{\outd}$ and $ \mathbf x \bar{\mathbf w}= \mathbf x$, then $\mathbf x\leq \tilde{\mathbf c}^\outd $; where inequalities are taken entrywise. 
    \label{thm:maximal_invariant}
\end{thm}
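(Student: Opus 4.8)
The plan is to work directly with the discrete dynamics \eqref{eq:a(t+1)}, extracting existence of the limit $\tilde{\mathbf c}^{\outd}$, its invariance and boundedness, and then maximality, rather than going through Perron--Frobenius theory. First I would record that the iterates are well-behaved: since $\bar{\mathbf w}\geq 0$ and $\mathbf a^{\outd}(0)=\mathbf w^{\outd}\geq 0$, an induction using \eqref{eq:a(t+1)} gives $\mathbf a^{\outd}(t)\geq 0$ for all $t$ (a minimum of non-negative numbers is non-negative), while the update rule makes $a_i^{\outd}(t+1)\leq a_i^{\outd}(t)$ by construction. So $(\mathbf a^{\outd}(t))_t$ is entrywise non-increasing and bounded below by $0$, hence converges to some $\tilde{\mathbf c}^{\outd}\geq 0$, and monotonicity from $t=0$ gives $\tilde{\mathbf c}^{\outd}\leq \mathbf w^{\outd}$. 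The map $\mathbf y\mapsto\big(\min([\mathbf y\bar{\mathbf w}]_i,\,y_i)\big)_i$ is continuous, so passing to the limit in \eqref{eq:a(t+1)} yields $\tilde c_i^{\outd}=\min([\tilde{\mathbf c}^{\outd}\bar{\mathbf w}]_i,\tilde c_i^{\outd})$ for each $i$, that is,
\[
\tilde{\mathbf c}^{\outd}\leq \tilde{\mathbf c}^{\outd}\bar{\mathbf w}\qquad\text{entrywise.}
\]

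The heart of the argument is to promote this inequality to the equality $\tilde{\mathbf c}^{\outd}\bar{\mathbf w}=\tilde{\mathbf c}^{\outd}$, and this is the step I expect to be the main obstacle. The idea is a conservation argument exploiting that $\bar{\mathbf w}$ is row-substochastic: setting $S=\{i\in\mathcal N_{\mathbf w}:w_i^{\outd}>0\}$, every row of $\bar{\mathbf w}$ indexed by $S$ sums to $1$ by \eqref{eq:ratio}, and every row outside $S$ is zero. Summing the coordinatewise inequality over all $i$ and swapping the order of summation gives
\[
\sum_i \tilde c_i^{\outd}\ \leq\ \sum_i [\tilde{\mathbf c}^{\outd}\bar{\mathbf w}]_i \ =\ \sum_j \tilde c_j^{\outd}\Big(\sum_i \bar w_{ji}\Big)\ =\ \sum_{j\in S}\tilde c_j^{\outd}.
\]
Since $\tilde{\mathbf c}^{\outd}\geq 0$, this forces $\tilde c_j^{\outd}=0$ for $j\notin S$; the two sums are then equal, so every one of the coordinatewise inequalities $\tilde c_i^{\outd}\leq[\tilde{\mathbf c}^{\outd}\bar{\mathbf w}]_i$ must be an equality, giving $\tilde{\mathbf c}^{\outd}\bar{\mathbf w}=\tilde{\mathbf c}^{\outd}$.

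Finally, for maximality, take any $\mathbf x$ with $\mathbf x\leq\mathbf w^{\outd}$ and $\mathbf x\bar{\mathbf w}=\mathbf x$. I would show $\mathbf x\leq\mathbf a^{\outd}(t)$ for every $t$ by induction: the case $t=0$ is the hypothesis, and if $\mathbf x\leq\mathbf a^{\outd}(t)$ then $\mathbf x=\mathbf x\bar{\mathbf w}\leq\mathbf a^{\outd}(t)\bar{\mathbf w}$ because right-multiplication by $\bar{\mathbf w}\geq 0$ preserves the inequality, and combining this with $\mathbf x\leq\mathbf a^{\outd}(t)$ and taking componentwise minima gives $x_i\leq\min([\mathbf a^{\outd}(t)\bar{\mathbf w}]_i,a_i^{\outd}(t))=a_i^{\outd}(t+1)$. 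Letting $t\to\infty$ gives $\mathbf x\leq\tilde{\mathbf c}^{\outd}$ (note no positivity of $\mathbf x$ is needed, so the statement as written is slightly stronger than a ``positive cone'' version). An alternative approach would normalise $\tilde{\mathbf c}^{\outd}$ to a stationary distribution of the random walk $\bar{\mathbf w}$ and appeal to Perron--Frobenius on its recurrent classes, which is presumably the ``random walk'' viewpoint alluded to above, but the monotone-dynamics route is self-contained and sidesteps a case analysis over communicating classes; the only real subtlety remains the substochastic summation step that converts the fixed-point inequality into an equality.
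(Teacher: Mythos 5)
Your proposal is correct and follows essentially the same route as the paper: pass to the limit of the monotone dynamics to get the fixed-point inequality $\tilde{\mathbf c}^{\outd}\leq\tilde{\mathbf c}^{\outd}\bar{\mathbf w}$, use row-stochasticity of $\bar{\mathbf w}$ to sum both sides and force equality, and prove maximality by the same induction $\mathbf x\leq\mathbf a^{\outd}(t)$. The only differences are cosmetic: you make the existence of the limit explicit via monotone convergence (the paper leaves it implicit), and you treat dead-end rows of $\bar{\mathbf w}$ as zero whereas the paper's definition \eqref{eq:transition_matrix} places a $1$ on the diagonal, making the matrix fully stochastic --- either convention makes your summation step go through.
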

Unfortunately,  $\tilde{\mathbf c}$ is not a circular part of $\mathbf w$, as the rest of the flow may still have circularity in it. However, the maximality guarantees that the rest of the flow $\mathbf w-\tilde {\mathbf c}$ has a \emph{sink} node, that is a node with no outgoing edges. As sink nodes clearly play no role in circularity, we can remove them, and study the same dynamics on the resulting network. Recursively doing this leads to algorithm~\ref{alg:Comprsession_BFF} that finds a distributed circular part of $\mathbf w$. 

\begin{algorithm}[hbt!]
\caption{Balanced Flow Forwarding (BFF) algorithm. }
\label{alg:Comprsession_BFF}
\SetKwFunction{Fmaxinv}{Maximal\_Invariant}
\KwIn{ Weighted adjacency matrix of a graph $w \geq 0$}
\KwOut{Decomposition $(c,d)$ of $w$ where $c$ is the circular part and $d$ the directional one }
$g = w $\\
$c = 0$\\
\While{$g$ has a node}{
    \While{$\exists$ a sink $s$ of $g$}{
        Remove $s$ from $g$ and all edges incident to $s$
    }
    $\tilde c = $ \Fmaxinv{$g$}\\
    $c = c + \tilde c$\\
    $g = g - \tilde c$\\
}
$d = w-c$
\end{algorithm}

When adding networks in algorithm~\ref{alg:Comprsession_BFF}, we simply add the weights of matching edges, as one of them may be missing some nodes.  The  \texttt{Maximal\_Invariant} function finds the network preserving the flow proportions of $\mathbf g$ whose out-weight vector is maximally invariant as in theorem~\ref{thm:maximal_invariant} but  with respect to $\mathbf g$ instead of $\mathbf w$. 

Note that we are guaranteed to remove at least one sink node in each iteration (except on the first one), so algorithm~\ref{alg:Comprsession_BFF} converges in at most as many iterations as the number of nodes plus one.
Moreover, the resulting circular part $\mathbf c$ will have flow in all cycles of $\mathbf w$ as is shown in appendix~\ref{app:BFF}. Unfortunately, removing sinks,  which is necessary to get convergence to a decomposition,
does not preserve the flow proportions so neither does algorithm~\ref{alg:Comprsession_BFF}. Despite this, the algorithm  still adequately finds distributed decompositions. For instance, for the example in figure \ref{fig:first_eg}(a) the BFF decomposition is figure \ref{fig:first_eg}(d) with $\alpha = 1/2$ and its circular part has the same weights in each of its cycles. 
Note that in this example the decomposition space is a segment and the BFF decomposition is its midpoint.  A larger example is discussed in section \ref{sec:ex_decomposition_space}.
 
\subsubsection{Local vs non-local version}
The dynamics \eqref{eq:a(t+1)} proposed to find the maximal invariant vector in the previous section are local, i.e.~nodes only need to know about their own edges to execute their update. In practical applications, such as those discussed in section~\ref{sec:compression}, this is key, as it allows us to find a flow decomposition in a decentralised system while keeping edge information private. However, it comes at a cost, as \eqref{eq:a(t+1)} might not converge in finite time. When privacy is not a concern, it is much more efficient to directly compute the maximal invariant vector through standard linear algebra algorithms, which leads to a polynomial time algorithm for BFF, see appendix~\ref{app:BFF}.

In this non-local approach one can take an appropriate scaling of any stationary distribution instead of  \texttt{Maximal\_Invariant} in algorithm~\ref{alg:Comprsession_BFF}, as it will lead to the same decomposition, see appendix~\ref{app:BFF}. In turn, this strengthens the choice of the BFF decomposition as representative. Indeed, as shown in appendix~\ref{app:BFF},  any algorithm that at each iteration removes circular flow keeping the proportion as in \eqref{eq:ratio} will lead to the BFF decomposition.  

\subsection{Relevance of maximum circularity vs. BFF in different applications}
\label{sec:relevance}

These two decompositions serve different analytical goals. Choosing between them depends on what one wishes to learn from or do with the network.

\subsubsection{Structural Analysis and Strong Connectivity in Directed Networks}
\label{sec:structural_analysis_connectivity}

When the goal is to understand how circular flow is structured and distributed --- whether to detect feedback loops, identify likely recirculation pathways, or visualize weighted cycle patterns --- the BFF decomposition offers a principled and informative approach, but maximum circularity provides a structural extreme that will generally be unrepresentative. 

Maximum circularity (\ref{sec:maximum}) allocates flow to cycles that support the highest total circulation, often saturating some while minimizing or excluding others. This can exclude edges and nodes that participate in feasible cycles, leading to decompositions that are unrepresentative and that fail to capture full connectivity structure. While it offers a useful upper bound on circularity, maximum circularity selects cycles from a global, top-down perspective, overlooking the fact that real-world dynamics typically emerge from local interactions. Moreover, the maximum circularity solution is not necessarily unique, with different decompositions achieving the same total, so a unique solution then depending on edge selection or tie-breaking rules.

BFF by contrast, models a decentralized process in which nodes forward incoming flow across outgoing edges in proportion to the share of each edge in their total out-strength (section \ref{sec:BFF}). As a result, it spreads circulation across all feasible cycles in a capacity-respecting and distributed manner, more representative of how circulation patterns arise in real-world distributed or decentralized systems. 
Moreover, the BFF decomposition is unique.

BFF can also be seen as providing a capacity-sensitive measure of strong connectivity in directed networks. While strong connectivity --- defined as mutual reachability via directed paths --- is a key topological condition underpinning dynamics such as consensus \cite{Olfati-Saber_2007}, contagion \cite{Xu_2018}, and synchronization \cite{Chopra_2012, Lu_2007}, the standard definition is binary and insensitive to edge capacities \cite{Hamedmoghadam_2021}. Even a single low-capacity reverse edge can render a network strongly connected, overstating its ability to support actual circulation or feedback. BFF overcomes this limitation by preserving the full cycle topology of each SCC, while weighting it by edge capacities that can sustain circular flow.\footnote{Once the BFF circular weights $ c_{ij}$ are known, you can quantify feedback at two complementary levels: internal circularity of a SCC $\Omega$ as $ \frac{\sum_{ij\in \mathcal E_{\mathbf \Omega}}c_{ij}}{\sum_{ij\in \mathcal E_{\mathbf \Omega}}w_{ij}}$, or the contribution of each SCC to network wide circularity $ \frac{\sum_{ij\in \mathcal E_{\mathbf \Omega}}c_{ij}}{\sum_{ij\in \mathcal E_{\mathbf w}}w_{ij}}$, forming an additive decomposition of the total circularity and pinpointing which SCCs drive the network‑level ratio. Together these provide a high‑resolution map of capacity‑supported circulation, distinguishing how strong connectivity is within each strongly connected block, and how much it contributes to the system as a whole.} 
Future work could explore how BFF circularity correlates with real-world dynamics such as contagion persistence, resilience, and feedback-driven instability.

\subsubsection{Explicit flow allocation problems such as netting and routing}
\label{sec:explicit_flow_allocation}

Many applied tasks --- such as routing, or financial netting \cite{Srinivasan_1986,OKane_2017} --- permit or require 
the allocation of flow between circular and directional components. We note that CDFD is the space of all ``full netting'' solutions that leave no cycles (see section \ref{sec:compression}).\footnote{But not partial netting solutions, which only need to satisfy the constraints \eqref{eq:x_constrains}.} In general the directional part of any CDFD provides a valid routing of nett source flow to nett sinks without cycles and the CDFD solution space captures all feasible such routings. Even routing problems on acyclic structures can be reformulated as a circulation problems in the CDFD framework by first connecting consumers to suppliers
\footnote{From each node with nett demand to the auxiliary node and from the auxiliary node to each nett supplier, with weights equal to the flow that should enter or leave the network at the supplier and consumers respectively \cite{Ahuja_book}.}. Under this formulation the CDFD solution space (section \ref{sec:topology_decomposition}) inherently includes all valid flow allocations, underscoring its flexibility and broad applicability. 
The most appropriate decomposition within this space depends on the specific objectives and constraints of the task. Among the many possibilities, both maximum circularity and  BFF have broad relevance.

Maximum circularity is a natural choice when the goal is to reduce inefficiencies. In some cases, the objective is to minimize directional flow, for example, lowering gross exposures in financial netting or reducing transport distances in logistics. In others, the aim is to maximize circularity in order to minimize leakage and create the most ``closed” system possible, as in circular supply chains, recycling, or round-trip routing.

However, real-world systems often involve multiple and sometimes competing objectives. In financial netting, for instance, maximizing circularity reduces gross positions to the greatest possible extent, but may excessively concentrate exposures on a smaller set of counterparties, increasing counterparty concentration risk \cite{Yijie_thesis} (see discussion in Section~\ref{sec:compression} below).
Similarly, in routing problems, a purely cost-minimizing solution can funnel flow through the nodes or edges providing low cost paths, leading to localized overload, fragility, or premature depletion of critical resources \cite{Badiu_2021}. Thus where additional concerns such as diversification, resilience, or fairness are important, a maximum circularity solution may not be the best choice.

In such cases, and short of solving a more complex multi-objective optimization problem, the BFF algorithm offers a simple and practical compromise. It performs well in terms of netting or cost reduction, while distributing flows more broadly across the network. By avoiding severe concentrations and overuse of specific paths or nodes, it can potentially promote e.g. more balanced exposure, resource usage profile, or robust network performance – something that merits detailed investigation in future work.

\subsubsection{Example application:~Post-trade portfolio compression}
\label{sec:compression}

Consider a concrete example:~portfolio compression in over the counter (OTC) derivatives markets. This is a form of multilateral netting that allows counterparties to reduce their gross positions --- a major source of counterparty risk and driver of regulatory capital requirements --- without altering nett positions \cite{DErrico_2021}. Compression is typically facilitated by a post-trade service provider, who is trusted with access to the full set of contract relationships and uses this knowledge to propose a compression that must be accepted by all participants in order to be implemented. Compression is a well-established market practice that has had a significant impact on these important markets \cite{Schrimpf_2015}. For instance, leading service provider OSTTRA reports having eliminated more than \$2,400 trillion in notional gross exposure through its TriReduce service.\footnote{See OSTTRA website: https://osttra.com/services/optimisation/portfolio-compression/}

Within this context, the directional component of any CDFD corresponds to a valid compression. Any of these valid compressions will not only reduces excess gross positions for participating nodes but also eliminate strong connectivity (entirely captured by the circular part) which may be a source of contagion risk \cite{Yijie_thesis, DErrico_2017}. 

Selecting the maximum circularity will yield the greatest possible reduction in gross notional. However, because maximum circularity concentrates flow in a subset of minimum cost paths, this can translate into the netting of contracts with some counterparties and not others, concentrating portfolios in fewer counterparties resulting in an undesirable pattern of exposure, and the asymmetric distribution of netting benefits among participants \cite{Yijie_thesis}. By contrast, the BFF algorithm explicitly aims to preserve the relative pattern of exposures and ensures that compression gains are shared more evenly. By utilizing all available cycles, it guarantees that every participant who can benefit from compression does so.

Moreover, the local version of BFF in principle allows for privacy-preserving compression mechanisms. Unlike traditional services that require a trusted intermediary with full visibility into participants’ portfolios, a local implementation of BFF could enable compression without requiring full portfolio disclosure. In principle, it also opens the door to decentralized compression mechanisms that do not rely on a central trusted third party --- particularly relevant in contexts such as decentralized finance (DeFi) or other settings where trustless coordination and data confidentiality are paramount. While privacy-preserving and decentralized netting have been topics of academics and practitioner interest, practical solutions have remained elusive \cite{Cao_2020, Bottazzi_2024,Patsinaridis_2018}, making this a promising direction for further research and development.

A full comparison between maximal circularity and BFF algorithm in the financial context is presented in \cite{Yijie_thesis}.

\section{Numerical results}
\label{sec:numerics}

In this section we will show how the proposed measures of circularity behave for large synthetic and real networks. We have already seen the example in figure~\ref{fig:first_eg}, where the maximum circularity is 6/8 and the BFF circularity is 3/8 (achieved at $\alpha = 1/2$). We now examine a slightly larger network to understand the possible shape of the entire decomposition space.

\subsection{Example decomposition space}
\label{sec:ex_decomposition_space}

\begin{figure}[ht]
     \centering
     \begin{subfigure}[b]{0.365\textwidth}
         \centering
        \includegraphics[width=\textwidth]{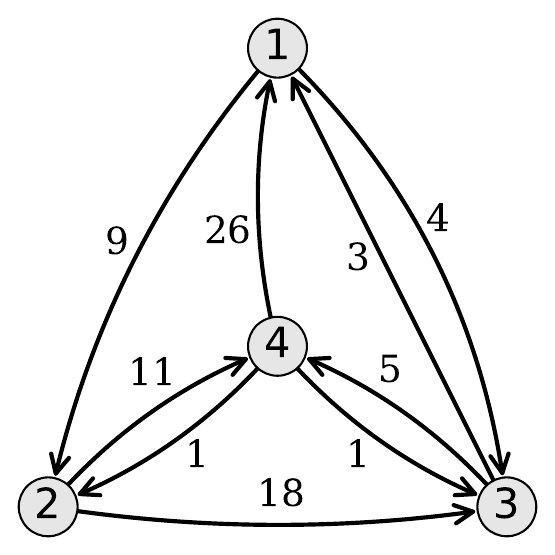}
     \end{subfigure}
     \hfill
     \begin{subfigure}[b]{0.585\textwidth}
         \centering
         \includegraphics[width=\textwidth]{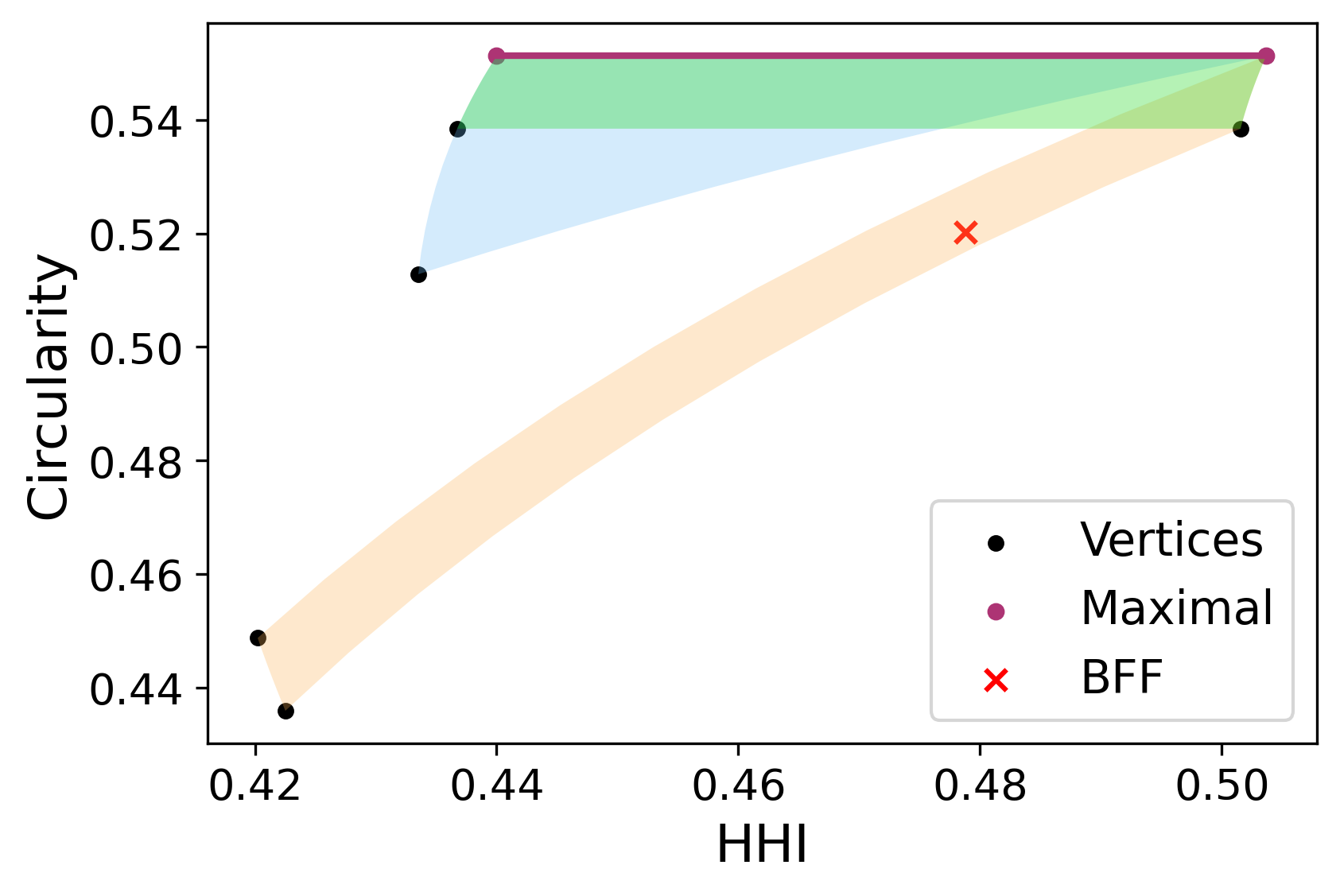}
     \end{subfigure}
        \caption{ Decomposition space (on the right) of the network depicted on the left. 
        The shaded regions in green and orange correspond to polytopes with 4 vertices whereas the blue one has 3. Maximal circularity is achieved by two vertices, depicted in purple, and thus also by the segment joining them. }    
        \label{fig:interesting_D}
\end{figure}

The natural embedding of the decomposition space is high dimensional, as many dimensions as edges, so it is necessary to project it down to visualise it. We choose as coordinates the circularity \eqref{eq:circularity_ratio} and the Herfindahl–Hirschman Index (HHI) in weight which is a concentration measure widely used in financial markets \cite{HHI_index}. For a network $\mathbf w$ this index is defined by 
\[\textnormal{HHI} = \sum_{i\in \mathcal N_{\mathbf w}} \left ( \frac{w_i ^\outd }{w_{\textnormal{total}}}\right )^2,\]
where $w_{\textnormal{total}} = \sum _{ij\in \mathcal E_ {\mathbf w} }w_{ij} $. The HHI ranges from $1/|\mathcal{N}_{\mathbf w}|$ to 1, where low values indicate high diversification. 

In figure~\ref{fig:interesting_D} a minimalistic example of a network that has an interesting decomposition space is depicted. The shaded regions representing polytopes are slightly deformed due to the  non-linearity of HHI, but recall that they are convex in their ambient space. 
Maximal circularity is achieved by two vertices (depicted in purple) and thus also by the segment joining them. The circular parts corresponding to these vertices differ only in the usage of weight 3 in the cycle 1, 2, 3 instead of the cycle 1, 2, 4. These cycles have the same length, thus explaining the equality. Minimum circularity is achieved in a single vertex, giving us a circularity range of $[0.44,0.55]$. We propose that the BFF decomposition which is in the interior of the orange polytope and with a circularity $0.52$ is a good representative of the space. 

\subsection{Circularity in random networks}
\label{sec:circularity_random}

\begin{figure}[ht]
    \centering
\includegraphics[width=0.88\textwidth]{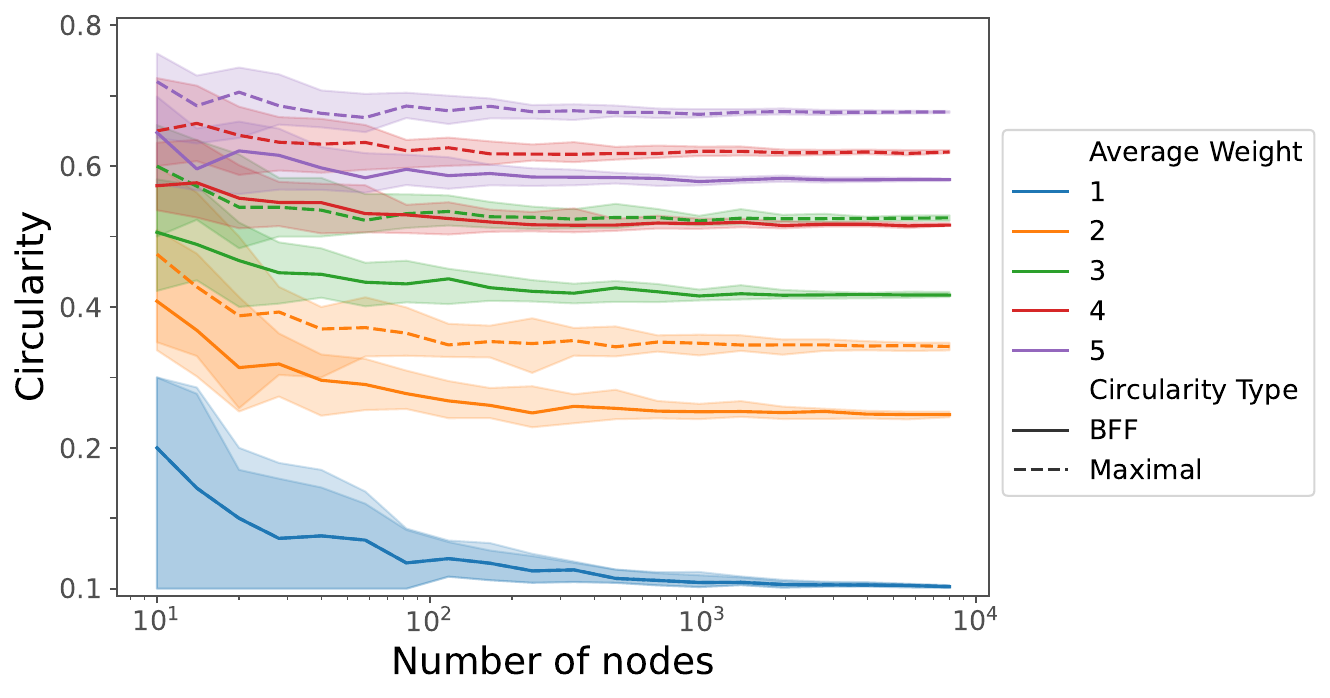}
\caption{Maximal and BFF circularity for random networks of increasing size for various average weights. The middle lines represent the medians and the shaded region the values between the first and third quartiles over 50 simulations. } 
    \label{fig:circularity_vs_num_nodes}
\end{figure}

Our random ensemble is a slight variation of the famous Erdős–Rényi ensemble, to give weighted directed networks. Given the number of nodes $n$ and edges $m$, we uniformly pick $m$ directed edges $ij$ with $i\not = j$ and  weight 1. A given edge $ij$ can be generated more than once, in which case we amalgamate them by adding their weights.

First we study how the circularity changes on increasing the size of the network, while keeping its average weight\footnote{That is the average in/out-weight or equivalently the total weight divided by number of nodes.} $k$ fixed, i.e.~we take $m = k n $. 
We find in figure~\ref{fig:circularity_vs_num_nodes} that besides a somewhat irregular start for small networks, the circularity values remain quite constant. 

When the average weight is $k = 1$, we see that both circularities tend to 0. This is not surprising as for $k<1$ the expected number of edges contained in cycles is finite, making the expected circularity tend to zero as it is normalized, see appendix~\ref{app:Random_networks}. 
For the other average weights we see a significant difference (around 0.1) between the maximal circularity and that attained by the BFF algorithm. Thus, we can infer that these networks have rich decomposition spaces.  

\begin{figure}[ht]
    \centering
\includegraphics[width=0.99\textwidth]{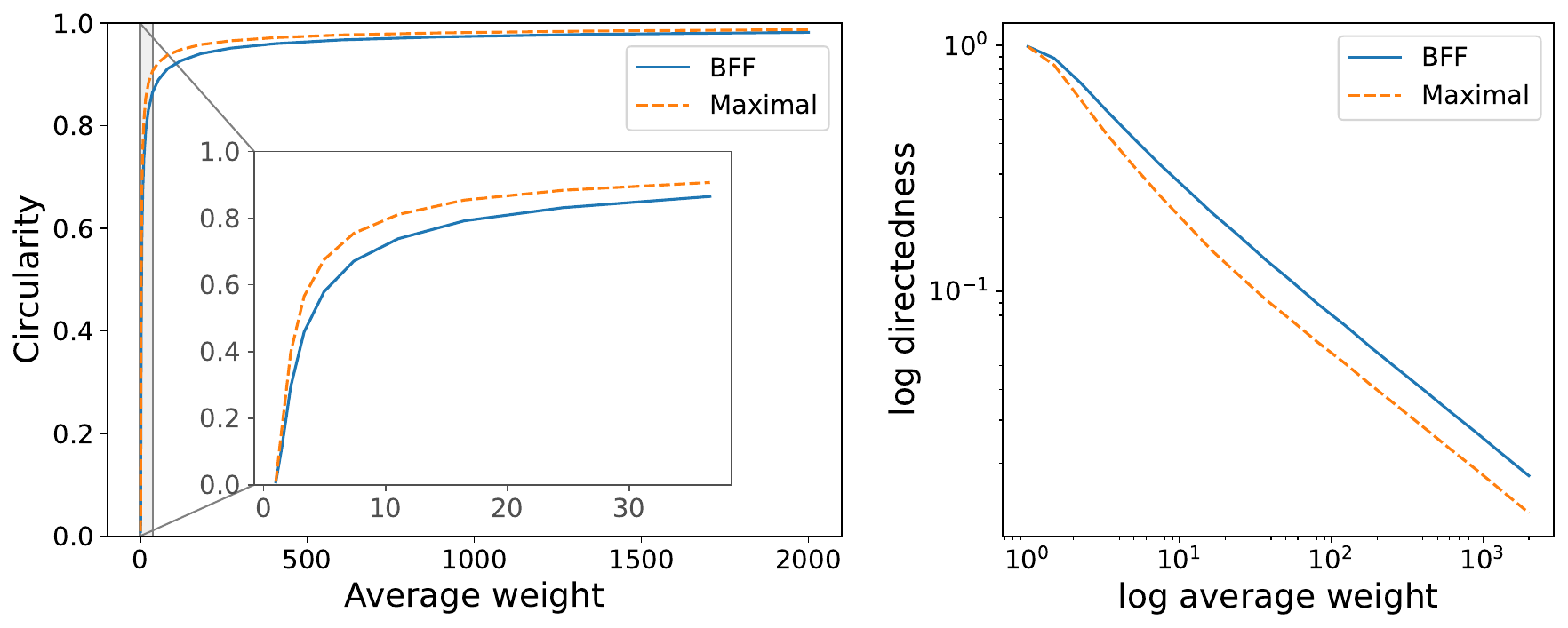}
\caption{On the left, the mean circularity for random networks with 2000 nodes from sparse to dense regime. An insert has been added to highlight the sparse regime.  On the right, the log-log plot of the mean directedness is depicted.}
    \label{fig:circularity_vs_average_degree}
\end{figure}

Now we plot circularity versus the average weight, see figure~\ref{fig:circularity_vs_average_degree}, for a fixed network size as we have seen that the number of nodes has little effect. For both maximal and BFF circularity we see a distinctive concave shape going from 0 at average weight 1 and tending to 1 at infinity. Both of these values can be deduced theoretically, see appendix~\ref{app:Random_networks}. These plots can be fit by a power law behaviour of the directionality $b k^{-a}$ or equivalently $1-bk^{-a}$ for circularity. We check this relation by making a log-log plot of directedness in figure~\ref{fig:circularity_vs_average_degree}(b),  where we indeed see approximate straight lines.

Note that for low average weight our random model behaves quite similarly to the unweighted directed ER model as most of the edges will have weight 1, see appendix~\ref{app:Random_networks}. 
This is no longer true for high densities. For instance for $k=|\mathcal N|$ in the unweighted ER one always obtains a complete graph with circularity 1 so that circularity cannot obey a power law for dense networks.

\subsection{Empirical example:~quantifying circulation of a digital community currency}
\label{sec:digital_currency}

Sarafu is a digital community currency active in Kenya. As its tokens are not legal tender, circulation offers a meaningful indicator of the currency’s practical success \cite{Mattsson_2023}. We analyze a weighted directed network of transactions (in Sarafu units) between users from January 2020 to June 2021, encompassing over 400,000 transactions totaling 293.7 million Sarafu among approximately 40,000 users. For full data documentation, see \cite{Mattsson_2022_data}.

Following \cite{Mattsson_2023}, we time-aggregate the transactions and apply the Infomap algorithm \cite{Rosvall_2008} to identify modular structure. The five largest modules --- corresponding roughly to distinct geographic regions --- capture 99.7\% of the total transaction volume. In their analysis, \cite{Mattsson_2023} studied circularity by counting cycles of lengths 2 to 5 within modules, and found significantly elevated cycle counts (especially for lengths 2 and 3) relative to unweighted null models, suggesting a high degree of topological circularity within the regions. We revisit this question using our CDFD-based approach, which accounts for both edge weights and cycles of arbitrary length. While \cite{Mattsson_2023} highlight the abundance of short cycles, our method --- being global, weight-sensitive, and normalized --- quantifies the extent of actual monetary circulation in the network. In table~\ref{tab:sarafu_modules} we observe high levels of circularity within individual modules, particularly among those of lower rank, which correspond to more densely interconnected communities. Furthermore, we are able to quantify the overall circularity of the network, obtaining a value of 0.880 under the BFF decomposition and 0.899 under the maximum circularity decomposition.

\begin{table}[ht]
\centering
\begin{tabular}{clcc}
\toprule
\textbf{Module} & \textbf{Region} & \textbf{BFF} & \textbf{Maximal} \\
\midrule
 1 & Mukuru Nairobi & 0.926 & 0.945 \\
 2 & Kinango Kwale & 0.801 & 0.817 \\
 3 & Misc Nairobi, Kilifi \& Nyanza & 0.746 & 0.772 \\
 4 & Kisauni Mombasa & 0.646 & 0.666 \\
 5 & Turkana & 0.684 & 0.687 \\
\bottomrule
\end{tabular}
\caption{Circularity of the five largest modules identified by Infomap. The table also indicates the approximate geographic region each module represents. Note that Module 3 spans localities from three distinct regions. } 
\label{tab:sarafu_modules}
\end{table}

We find little difference between the BFF and maximal circularity values across all modules, in contrast to the variations reported in previous sections for random networks. This pattern persists even when controlling for the number of nodes and edges. To explain this phenomenon, we leverage our methodology, which not only computes a circularity value but also provides a circular flow decomposition. Analyzing these decompositions reveals that the Sarafu networks exhibit a remarkably simple circularity structure, dominated by short cycles, particularly those of length two. This is consistent with the findings in \cite{Mattsson_2023}
and may reflect the prevalence of bilateral economic relationships between individuals. In contrast, our sparse network ensembles are expected to contain relatively few short cycles (see Appendix~\ref{app:Random_networks}). Note that cycles of length two are fully captured by all circularity measures, as they are entirely included within the circular component of any decomposition. Therefore, in networks where circularity is predominantly driven by such interactions, as is the case with this dataset, we should expect similar circularity values regardless of the measure used.

Finally, it is also valuable to examine the circulation of money over shorter time intervals. We defer this analysis to Section~\ref{sec:numerics_comparison}, where we present monthly snapshots of the Sarafu dataset in figure~\ref{fig:correlation}.

\subsection{Comparison with other measures}
\label{sec:numerics_comparison}

In this section, we present a numerical comparison of CDFD based circularity measures against normalized circularity metrics from the literature. Specifically, we benchmark against:~trophic incoherence \cite{how_directed, Kichikawa2019} which is within the HHD framework and is increasingly being used as a measure of circularity and directedness; LM circularity, the complement to the directedness measure introduced by Luo et al.~\cite{Luo_2011}; and the FCI, originally proposed in ecology \cite{Finn_1976,Braun_2021} and widely used in other domains (see Section~\ref{sec:motivation_comparison} for a detailed review). For this comparison we employ the two specific decompositions highlighted in Section~\ref{sec:choosing_rep}: maximum circularity and BFF.\footnote{Note we do not consider the Ulanowicz-decomposition here \cite{Ulanowicz_1983} (which also conforms to a CDFD) as it relies on explicit cycle enumeration (which scales exponentially with network size), depends on some arbitrary choices, and lacks clear conceptual motivation as a specific choice of CDFD.}

For the comparison we employ a diverse pool of empirical datasets. These include:~food webs (where edge-weights represent the flow of biomass between species) \cite{Lin_2024}; transaction networks from the Sarafu community currency in Kenya (where weights represent the value of payments) \cite{Mattsson_2022_data}; national input–output networks (capturing inter-sectoral flows of value added for different economies) \cite{how_directed}; and government web-link networks between U.S. state-level agencies (used to study the structure of government organization based on online presence) \cite{Kosack_2018}.

\begin{figure}[ht]
    \centering
\includegraphics[width=0.9\textwidth]{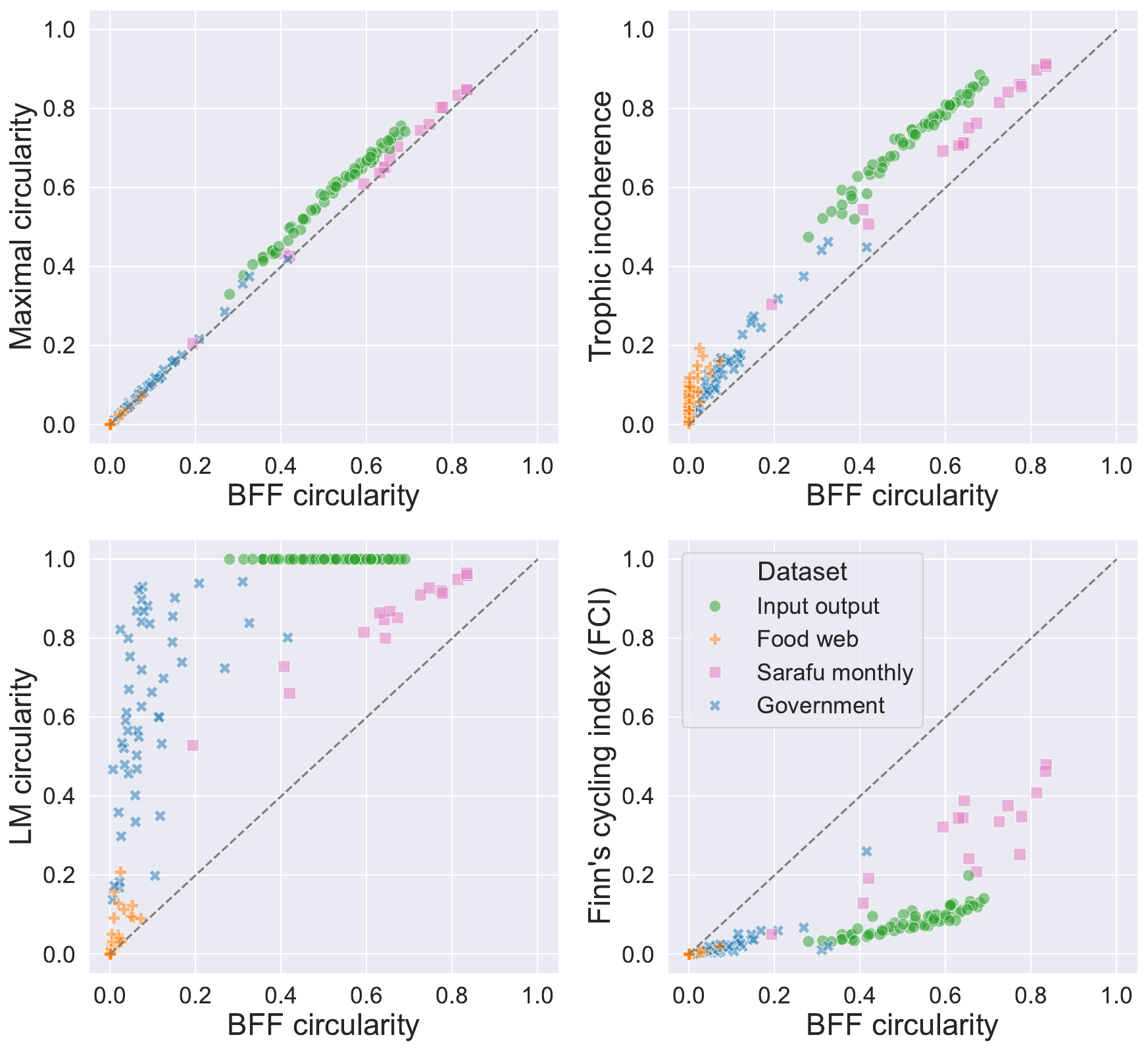}
\caption{Correlation plots between different notions of circularity. }
    \label{fig:correlation}
\end{figure}

By applying the measures across these datasets, we are able to assess their behavior across systems with different underlying structures. This analysis is plotted in figure~\ref{fig:correlation}.

We see max circularity and BFF move closely together across all datasets. Max circularity is always greater than or equal to BFF, as expected, but the gap widens for the dense input–output networks. In these networks, the rich connectivity allows max circularity to route directed flow almost bipartitely, while BFF distributes weight more evenly across the full connectivity structure, better reflecting the interdependence between different industries.

Because the two measures are so tightly correlated, in the comparisons with other metrics we show only BFF; it would not add anything or change the conclusions to also show comparison with max circularity.

BFF and trophic incoherence are strongly correlated, as cycles undermine the possibility of strict hierarchical ordering. This somewhat justifies its use as a proxy for circularity  ---  or, conversely, for directedness. However, the dispersion visible in the food-web cloud illustrates a key limitation. Many of these food webs are entirely acyclic (with zero BFF), yet display substantial variation in incoherence. This variation arises not from circular flow, but from feed-forward structures. Strictly speaking, incoherence measures the extent to which a network departs from a perfectly layered hierarchy, making it an imperfect and potentially misleading proxy for circularity.

For LM circularity, the relationship with BFF is much weaker. In the production network data, every sector connects to every other sector, so the graphs are strongly connected and LM circularity saturates at 1, offering no discriminatory power. Meanwhile, BFF varies significantly (from approximately 0.2 to 0.7), indicating meaningful differences in circularity across countries. In contrast, government-web graphs show the opposite pattern: LM circularity suggests government structures vary a lot (approximately 0.1 to 0.9), however BFF clusters around 0.1. This shows that even in government networks that exhibit high topological circularity, references are mostly in one direction. These cases illustrate the significant limitations of LM circularity. 

Finally, as theory predicts, FCI and BFF coincide at the two extremes: for acyclic networks, FCI=BFF=0; for pure circulations, FCI is undefined but tends to 1 by continuity, matching BFF. Between these extremes, the relationship is nonlinear: for weak feedback, FCI grows more slowly than BFF (because inflow ``leakage'' dominates the Neumann series expansion), while for stronger feedback (above approximately 0.5 BFF), FCI steepens and converges toward BFF as the network approaches a circulation. FCI may be best suited for applications where a residence-time under Markov dynamics interpretation is meaningful and relevant.

\section{Conclusions}
\label{sec:conclusions}

We introduced the Circular Directional Flow Decomposition (CDFD), a general framework for partitioning any weighted directed network into two components:~a directional (acyclic) flow and a circular (divergence-free) flow. This yields a natural, dimensionless circularity measure, ranging from 0 (purely directional) to 1 (fully circular), that quantifies the extent of circulation in the system. While the decomposition is not unique, we show that the space of valid decompositions has favourable geometric and topological properties, and its connectedness reflects the system’s intrinsic capacity for circulation.

Through numerical simulations, we demonstrate that decomposition spaces can span wide circularity ranges, highlighting the importance of selecting appropriate representatives. We propose two polynomial-time benchmark decompositions:~maximum circularity, which globally minimizes directional flow, and Balanced Flow Forwarding (BFF), a unique, distributed solution that proportionally reflects local structure. While both are valid, they serve different purposes:~because BFF arises from local proportional forwarding, it is suited for quantifying and analyzing the structure of circulation in a broad range of systems. In contrast, maximum circularity may be preferable in top-down applications prioritizing efficiency, such as financial netting or closed-loop logistics. Yet even in top-down contexts, BFF may be advantageous where local fairness or diversification matters (e.g.~avoiding counterparty concentration or promoting balanced resource use).

The CDFD framework is broadly applicable across systems where circulation and directionality matter --- from nutrient cycles and metabolic pathways to financial infrastructure and supply chains. Practical uses include routing, netting, and resilience analysis, with localized BFF variants especially suited to decentralized settings. Directions for future methodological work include node-level circularity metrics, detection of internally circular communities, sampling of the decomposition space,  flow-rate-based variants, and extensions to handle negative flows. It would be also worth considering other possible canonical choices of decomposition, such as taking the $L^2$-closest circular flow, and developing computationally tractable methods for approximating the minimum circularity CDFD. We hope this framework supports both theoretical insight and practical innovation across domains.

\vspace{10pt}

\textbf{Data accessibility:} The dataset introduced in section~\ref{sec:digital_currency} is available via the UK Data Service (UKDS) under their End User License, 
which stipulates suitable data-privacy protections. The dataset is available for download from the UKDS ReShare 
repository (\url{https://reshare.ukdataservice.ac.uk/855142/}) to users registered with the UKDS. Data cleaning has been done following \cite{Mattsson_2023}. Additional datasets used in Section \ref{sec:numerics_comparison} were obtained from the sources cited therein, all of which have made the data publicly available for download (food webs \cite{Lin_2024}; input–output networks \cite{how_directed}; government web-link networks \cite{Kosack_2018}). The code required to reproduce each figure and analysis is available for download at this Github repository \texttt{https://github.com/marchd1997/CDFD}. A Python and Sage toolbox is also available to facilitate the implementation of the algorithms introduced in this work.

\AtNextBibliography{\small}
\printbibliography

\appendix

\section{Minimum cost flow equivalence}
\label{app:minimu_cost_equivalence}

Following the notation in \cite{Ahuja_book}, a potential on a network is an assignment of real values on its nodes. We say that a potential $\phi$ is a topological ordering if following any edges in the network increases the potential. That is, $\phi (i)<\phi(j)$ for each edge $ij\in \mathcal E_{\mathbf w}$. Not all networks have topological orderings; in fact, it is not hard to show that that acyclic networks are precisely the ones that have a topological ordering \cite[p.77]{Ahuja_book}.

Potentials help us find optimal solutions to \eqref{eq:x_constrains} thanks to the following result \cite[Theorem 9.4]{Ahuja_book}. 
\begin{lem}
A network $\mathbf x ^*$ satisfying \eqref{eq:x_constrains} is a minimum of \eqref{eq:min_x} iff there exists a potential $\phi$ in $\mathbf w$ such that
\begin{subequations}
\begin{align}
    \kappa_{ij}^\phi>0 \hspace{0.5cm}&\Longrightarrow \hspace{0.5cm}x^*_{ij}=0,\label{eq:lem_kappa_ij^phi_1}\\
    \kappa_{ij}^\phi<0 \hspace{0.5cm}&\Longrightarrow\hspace{0.5cm} x^*_{ij}=w_{ij},\label{eq:lem_kappa_ij^phi_2}\\
    0< x^*_{ij}<w_{ij}\hspace{0.5cm} &\Longrightarrow\hspace{0.5cm} \kappa_{ij}^\phi=0,\label{eq:lem_kappa_ij^phi_3}
\end{align}
\label{eq:lem_kappa_ij^phi}
\end{subequations}
where $\kappa_{ij}^\phi = \kappa_{ij}-(\phi(j)-\phi(i))$.
\end{lem}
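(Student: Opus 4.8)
The plan is to recognise the optimisation problem~\eqref{eq:x_constrains}--\eqref{eq:min_x} as a linear program and to obtain \eqref{eq:lem_kappa_ij^phi_1}--\eqref{eq:lem_kappa_ij^phi_3} as its complementary slackness conditions, with the potential $\phi$ playing the role of the dual variables attached to the flow-conservation constraints. Concretely, I would write the LP out: the variables are the edge flows $x_{ij}$, $ij\in\mathcal E_{\mathbf w}$; the objective is $\sum_{ij}\kappa_{ij}x_{ij}$; there is one equality constraint per node $i$, namely $\sum_{j}x_{ji}-\sum_{j}x_{ij}=w_i^{\textnormal{in}}-w_i^{\textnormal{out}}$, together with the box constraints $0\le x_{ij}\le w_{ij}$. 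The feasible set is a non-empty (it contains $\mathbf w$) bounded polytope, so an optimum exists. Attach a free multiplier $\phi(i)$ to node $i$'s balance constraint, a multiplier $\alpha_{ij}\ge 0$ to $x_{ij}\le w_{ij}$, and a multiplier $\beta_{ij}\ge 0$ to $-x_{ij}\le 0$. Since $x_{ij}$ enters node $i$'s balance with coefficient $-1$ and node $j$'s with coefficient $+1$, stationarity of the Lagrangian in $x_{ij}$ reads $\kappa_{ij}-(\phi(j)-\phi(i))+\alpha_{ij}-\beta_{ij}=0$, i.e. $\kappa_{ij}^\phi=\beta_{ij}-\alpha_{ij}$; this equation, with $\alpha,\beta\ge 0$, is exactly dual feasibility for this LP.

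For the ``only if'' direction, optimality of $\mathbf x^*$ supplies dual multipliers $(\phi,\alpha,\beta)$ obeying the identity above together with the complementary slackness relations $\alpha_{ij}(w_{ij}-x^*_{ij})=0$ and $\beta_{ij}x^*_{ij}=0$. Because $w_{ij}>0$ on every edge of $\mathcal E_{\mathbf w}$, at most one of $\alpha_{ij},\beta_{ij}$ can be positive, so the sign of $\kappa_{ij}^\phi=\beta_{ij}-\alpha_{ij}$ identifies the tight bound: $\kappa_{ij}^\phi>0$ forces $\beta_{ij}>0$, hence $x^*_{ij}=0$; $\kappa_{ij}^\phi<0$ forces $\alpha_{ij}>0$, hence $x^*_{ij}=w_{ij}$; and $0<x^*_{ij}<w_{ij}$ forces $\alpha_{ij}=\beta_{ij}=0$, hence $\kappa_{ij}^\phi=0$. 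These are precisely \eqref{eq:lem_kappa_ij^phi_1}--\eqref{eq:lem_kappa_ij^phi_3}.

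For the ``if'' direction, given a feasible $\mathbf x^*$ and a potential $\phi$ with the three stated properties, define $\beta_{ij}=\max(\kappa_{ij}^\phi,0)$ and $\alpha_{ij}=\max(-\kappa_{ij}^\phi,0)$. These are nonnegative and satisfy $\beta_{ij}-\alpha_{ij}=\kappa_{ij}^\phi$, so $(\phi,\alpha,\beta)$ is dual feasible; and in each of the three cases the hypotheses make $\alpha_{ij}(w_{ij}-x^*_{ij})=0$ and $\beta_{ij}x^*_{ij}=0$ hold (when $\kappa_{ij}^\phi>0$ then $\beta_{ij}>0$ but $x^*_{ij}=0$, etc.). Primal feasibility, dual feasibility and complementary slackness together give optimality of $\mathbf x^*$.

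I do not expect a genuine obstacle here; the only care needed is bookkeeping. The sign convention in the balance constraint is what produces the combination $\phi(j)-\phi(i)$ in $\kappa_{ij}^\phi$, and two harmless degeneracies should be noted: summing the node constraints yields the identity $0=\sum_i(w_i^{\textnormal{in}}-w_i^{\textnormal{out}})$, so $\phi$ is determined only up to an additive constant, which does not affect reduced costs; and a self-loop $ii$ cancels out of every balance constraint, so $\kappa_{ii}^\phi=\kappa_{ii}>0$ and hence $x^*_{ii}=0$, consistent with self-loops carrying no flow in a minimal directional part. A self-contained alternative that bypasses LP duality is to use the classical criterion that a feasible flow has minimum cost iff its residual network has no negative-cost directed cycle, combined with the Bellman--Ford/Gallai fact that absence of negative cycles is equivalent to the existence of node values whose differences dominate the (residual) arc costs; rearranging the latter inequalities gives exactly the reduced-cost signs in \eqref{eq:lem_kappa_ij^phi_1}--\eqref{eq:lem_kappa_ij^phi_3}. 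I would choose whichever of the two routes meshes better with the tools developed elsewhere in the appendix.
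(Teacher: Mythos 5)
Your argument is correct, but it is worth pointing out that the paper does not prove this lemma at all: it is quoted verbatim as \cite[Theorem 9.4]{Ahuja_book} (the ``reduced cost optimality conditions'' for minimum cost flow), and the appendix only uses it as a black box to prove Theorem~\ref{thm:CDFD_eqi_min_flow}. So you have supplied a proof where the paper supplies a citation. Your route --- writing the problem as an LP with one balance equality per node and box constraints per edge, attaching multipliers $\phi(i)$, $\alpha_{ij}\ge 0$, $\beta_{ij}\ge 0$, reading off $\kappa_{ij}^\phi=\beta_{ij}-\alpha_{ij}$ from stationarity, and translating complementary slackness into \eqref{eq:lem_kappa_ij^phi_1}--\eqref{eq:lem_kappa_ij^phi_3} in both directions --- is sound; the sign bookkeeping is right (the $-1/+1$ incidence of $x_{ij}$ in the balance constraints of $i$ and $j$ produces exactly $\phi(j)-\phi(i)$), the ``only if'' direction correctly invokes LP strong duality for existence of the multipliers, and the ``if'' direction correctly reconstructs a complementary dual pair from the sign of $\kappa_{ij}^\phi$. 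The cited textbook instead derives these conditions from the negative-cycle optimality criterion on the residual network plus a shortest-path-potential argument, which is essentially the ``self-contained alternative'' you sketch at the end; the duality route is shorter and more uniform, while the residual-network route is the one that meshes with combinatorial min-cost-flow algorithms. Either is acceptable; your observations about $\phi$ being defined only up to an additive constant and about self-loops forcing $x^*_{ii}=0$ are correct and harmless.
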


We can now prove theorem~\ref{thm:CDFD_eqi_min_flow}.

\begin{proof}[Proof of theorem~\ref{thm:CDFD_eqi_min_flow}]
Assume an optimal solution $\mathbf x ^*$ of \eqref{eq:min_x} contains a cycle. Then, removing from $\mathbf x ^*$ a small amount of flow through it will  preserve the restrictions \eqref{eq:x_constrains} while reducing the total cost, which contradicts $\mathbf x ^*$'s optimally. Thus,  $\mathbf x ^*$ is acyclic and as it satisfies \eqref{eq:x_constrains}, $\mathbf w-\mathbf x ^*$ is balanced. So $\mathbf x ^*$ is a directional part of $\mathbf w$.  

For the other direction consider a directional part $\mathbf d$ of $\mathbf w$, and we show that it satisfies \eqref{eq:lem_kappa_ij^phi} for appropriate choice of costs $\vect \kappa> 0$ and potential $\phi$. As $\mathbf d$ is  acyclic, we  let $\phi$ be a topological ordering of its nodes. For edges  $ij\in \mathcal E_{\mathbf d}$, we let $\kappa_{ij}=\phi(j)-\phi (i)>0$ so that  $\kappa_{ij}^\phi=0$. Then \eqref{eq:lem_kappa_ij^phi} is satisfied regardless of the value of $d_{ij}$. For edges in $\mathbf w$ but not in $\mathbf d$ we have $d_{ij} = 0$. Then \eqref{eq:lem_kappa_ij^phi} reduces to \eqref{eq:lem_kappa_ij^phi_2}, and to complete the proof it is enough to guarantee that the condition on the left hand side is never satisfied, i.e. $\kappa_{ij}^\phi\geq 0$. This can be done by simply choosing $\kappa_{ij}>0$ large enough.

\end{proof}

\section{Polytope complexes and their topology}
\label{app:background}

\subsection{Polytope complexes} 
\label{app:background_polytope}

Given points $\mathbf x_1, \dots , \mathbf x_k\in \mathbb R^m$, we say that $\sum_{i=1}^k \lambda_i \mathbf x_i$ is a convex combination of them if $\sum_{i=1}^k \lambda_ i = 1$ and $\lambda_i\geq 0$ for all $i$. Given $M \subset \mathbb R ^m$ its convex hull, $\conv  M$, is the set of all convex combinations of points in $ M$. We say that $ M$ is convex if $\conv  M =  M$.  

We will be interested in sets that are the convex hull of finitely many points on $\mathbb R^m$,  which are known as convex polytopes or simply  as \emph{polytopes}. It turns out that polytopes can also be characterized as bounded sets given by a finite number of linear equalities and non-strict inequalities. 
The dimension of a polytope $ P$, $\dim P$, is the dimension of the smallest affine space that contains it.  
One of the most recognisable features of polytopes are their faces, which are given by
the intersections of the polytopes with  hyperplanes that do not divide them into two parts.
Equivalently, and more relevant for our work,  $F\subsetneqq P$  is a proper \emph{face} of a polytope $ P$ if it minimises some linear functional, that is, 
\begin{equation}
    F = \operatorname*{arg\,min}_{\mathbf x \in  P} \sum_{i=0}^m \kappa_i x_i
    \label{eq:arg_min_def_F}
\end{equation}
for some $\vect \kappa \in \mathbb R^m$. It will be convenient to consider $\emptyset$ and $ P$ as  improper \emph{faces}. It is clear from this definition that faces are themselves polytopes. We note that $ P$ can also  be seen as the minimizer of the zero functional, and it can be shown that it is the only face of $P$ with dimension $\dim P$.
 The zero-dimensional faces, i.e. faces formed by a single point, are called \emph{vertices}, and we denote the set of them by $\ver P$. It can be shown that all faces can be expressed as the convex hull of the  subset of vertices they contain, and in particular $\conv (\ver P)= P$. 
 The collection of all faces of a polytope has a lot of structure, which can be summarised by the fact that it forms a polytope complex. 

A \emph{polytope complex} is a finite family $\m K$ of polytopes, called \emph{cells} of $\m K$, such that
\begin{enumerate}[(i)]
    \item each face of a member of $\m K$ is itself a member of $\m K$;
    \item the intersection of two members of $\m K$ is a face of both members. 
\end{enumerate}
Cells are partially ordered by inclusion and we denote by $\Max \m K$, the set of maximal cells under this relation.  We also denote by $\ver \m K$ the union of the vertices of all the cells in $\m K$. Polytope complexes are a generalization of the better-known \emph{geometrical simplicial complexes}, which are formed by \emph{simplices}, i.e. polytopes with dimension one less than the number of vertices, and satisfy the same relations (i), (ii). 

The essential information of a simplicial complex can be retained by simply recording each of its simplices by the set of vertices they contain. This gives us an \emph{abstract simplicial complex}, that is, a collection $\m A$ of subsets of a finite set $V$ such that if $S\in \m A$ then any subset of $S$ is also in $\m A$. We refer to $V$ as vertices and denote them by $\ver \m A$, even if the complex is not geometrical. In fact, we can always find a geometrical realisation of $\mathcal A$, which we denote\footnote{We define this up to homeomorphism, as there is more than one possible geometrical realisation but all of them are homeomorphic.} by $\geo \mathcal A$, which is a geometrical simplicial complex whose abstract representation is isomorphic to $\mathcal A$. We say that $f: \m A_1 \rightarrow \m A_2$ is a \emph{simplicial isomorphism} if $f:\ver \m A_1\rightarrow\ver \m A_2$ is a bijection and $S\in \m A_1$ iff $f(S)\in \m A_2$. If such a map $f$ exists the simplicial complexes are isomorphic and they have the same geometrical realizations.  

For a more in-depth look at the concepts defined in this section see \cite{convex_polytopes} and \cite{gallier2008notes}. 

\subsection{Topology}

When we talk about the topology of a polytope complex $\m K$ we mean the topology of the space formed by the union of all the cells, i.e.~the space $\bigcup \m K:=\cup_{F\in \m K}F$. Note that it is enough to take the maximal cells as $\bigcup \m K = \bigcup \Max \m K$. Similarly the topology of an abstract simplicial complex is given by its geometric realization. One may hope to study this space up to homeomorphism, which we denote by $\cong$, but in practice one usually uses the weaker notion of homotopy equivalence. Intuitively, two spaces $X$ and $Y$ are \emph{homotopy equivalent}, and we denote it by $X\simeq Y$, if they can be transformed into one another by bending, shrinking and expanding operations, see \cite{Hatcher} for a formal definition. We say that a space is \emph{contractible} when it is homotopically equivalent to the space with a single point $\{*\}$, so it has the simplest possible topology under this relation. Contractibility implies connectivity and path connectivity, but also the absence of holes in any dimension. For instance, convex sets are contractible. 

In this work the only tool we will use to show homotopy equivalence is the following version of the nerve lemma. Given a finite collection of sets $\mathcal X$ we define its \emph{nerve} as the abstract simplicial complex
\[\ner \mathcal X = \left \{\mathcal Y \subset \mathcal X \; | \;  \bigcap \mathcal Y \not = \emptyset  \right \},\]
where we use the notation $\bigcap \mathcal Y : = \bigcap_{Y\in \m Y} Y$. The following  deep connection between this abstract simplicial complex and the original space will be the key for the topological study of our spaces, see \cite{Nerve_lemma}.

\begin{lem}[Nerve Lemma]
Let $\mathcal X$ be a collection of convex compact sets. Then, the nerve of $\mathcal X$ is homotopically equivalent to the union of sets in $\mathcal X$. That is,
\[\ner \m X \simeq \bigcup \m X.\]
\label{lem:nerve}
\end{lem}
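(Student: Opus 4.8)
The plan is to prove the classical fact that a cover whose nonempty intersections are all contractible is homotopy equivalent to its nerve, specialised to the convex compact case where contractibility comes for free. Write $\m X=\{X_1,\dots,X_n\}$, let $K=\bigcup\m X$, and let $P=\geo(\ner\m X)$ be the geometric realisation with vertices $e_1,\dots,e_n$ the standard basis vectors of $\mathbb R^n$; a point $t=(t_1,\dots,t_n)\in P$ has support $\sigma(t)=\{i:t_i>0\}$, which is by definition a simplex of $\ner\m X$, i.e.\ $\bigcap_{i\in\sigma(t)}X_i\neq\emptyset$. The key structural observation is that, because each $X_i$ is convex, every nonempty finite intersection $\bigcap_{i\in S}X_i$ is convex and hence contractible. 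The strategy is to build a single auxiliary ``blow-up'' space that projects onto both $K$ and $P$ by homotopy equivalences.

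Concretely I would form
\[B=\Bigl\{(x,t)\in K\times P\ :\ x\in X_i\ \text{for all}\ i\in\sigma(t)\Bigr\},\]
with the two projections $p_1\colon B\to K$ and $p_2\colon B\to P$. The fibre of $p_2$ over $t$ is the convex set $\bigcap_{i\in\sigma(t)}X_i$, and I would show $p_2$ is a homotopy equivalence by induction on the skeleta of $P$: over the relative interior of each simplex the preimage is a product with this contractible factor, and the pieces glue compatibly along faces through the inclusions of the (convex) intersections, a standard Mayer--Vietoris-type assembly. For $p_1$ I would instead exhibit an explicit section. Given a partition of unity $\{\rho_i\}$ subordinate to the cover (available once we pass to open sets, below), the map $f(x)=\sum_i\rho_i(x)e_i$ lands in $P$ because $\rho_i(x)>0$ forces $x\in X_i$, so the supported indices form a simplex; then $s(x)=(x,f(x))$ is a section of $p_1$, and $s\circ p_1\simeq\mathrm{id}_B$ by the straight-line homotopy inside each fibre, which is the simplex spanned by $\{i:x\in X_i\}$ and hence convex. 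Consequently $p_1$ is a homotopy equivalence, and $K\simeq B\simeq P$ as required.

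To supply the partition of unity used above — and more basically to keep the $p_1$ argument within reach — I would first replace the closed cover by an open one, thickening each set to its open $\varepsilon$-neighbourhood $U_i=\{x:d(x,X_i)<\varepsilon\}$, which is again convex. A compactness estimate shows the nerve is unchanged for small $\varepsilon$: if $\bigcap_{i\in S}X_i=\emptyset$ then $x\mapsto\max_{i\in S}d(x,X_i)$ is continuous, strictly positive, and proper, hence bounded below by some $\delta_S>0$, so $\bigcap_{i\in S}U_i=\emptyset$ once $\varepsilon<\min_S\delta_S$; the reverse implication is immediate. Moreover $K$ is a finite union of compact convex sets meeting in convex sets, hence a compact ANR, so its neighbourhood $\bigcup_i U_i=\{x:d(x,K)<\varepsilon\}$ deformation retracts onto $K$ for small $\varepsilon$. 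Thus running the whole $B$-construction on the $U_i$ changes neither $K$ (up to homotopy) nor $P$, while furnishing the partition of unity.

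I expect the main obstacle to be the $p_2$ direction: pointwise contractibility of the fibres does not by itself force $p_2$ to be a homotopy equivalence, so the genuine work is the inductive gluing over the skeleta of $P$, verifying at each stage that the local product retractions extend compatibly across the faces where the fibre $\bigcap_{i\in\sigma}X_i$ jumps to a larger intersection. The subsidiary thickening lemma also needs care, since a careless enlargement could merge previously disjoint sets and alter the nerve; the positive-gap estimate $\delta_S>0$ is exactly what rules this out.
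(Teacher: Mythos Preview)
The paper does not prove this lemma at all: it is stated as a classical result with a citation to the literature and then used as a black box in the subsequent arguments. So there is no ``paper's own proof'' to compare against; you have gone well beyond what the authors do.

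Your sketch follows the standard approach to the nerve theorem (the ``blow-up'' or homotopy-colimit space $B$ with its two projections), and the overall architecture is sound. The $p_1$ direction via a partition of unity and the straight-line homotopy in the fibre simplex is correct, and your thickening lemma is fine: the $\varepsilon$-neighbourhood of a compact convex set is again convex, the positive-gap estimate using properness of $x\mapsto\max_{i\in S}d(x,X_i)$ is valid since the $X_i$ are bounded, and the ANR deformation retract of the $\varepsilon$-neighbourhood onto $K$ is standard. You correctly identify the $p_2$ direction as the real content; as written, ``induction on skeleta with a Mayer--Vietoris-type assembly'' is a promissory note rather than a proof, but it is the right promissory note --- this is precisely how the classical argument (Weil, Borsuk, and the version in Hatcher, Corollary~4G.3) proceeds. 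For the purposes of this paper, citing the result as the authors do is entirely appropriate.
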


\section{Non-convex decomposition space}
\label{app:non-convex}

Let us consider the network depicted on the left of figure~\ref{fig:eg_non_convex}. By recursively 
picking one of its cycles and removing the maximum allowed capacity from it until no more cycles are left, we find ``extreme'' decompositions of it, i.e.~the vertices of $D$. By doing this in different orders, we find three directional parts of the network that are depicted on the right of figure~\ref{fig:eg_non_convex}. The convex combination of some of these directional parts give us other decompositions, which are represented by thick solid lines. Note that there is no line connecting the bottom ones, as convex combinations of them would contain a cycle between nodes 2 and 3. In particular the decomposition space is not convex and consists of two segments joined in a vertex as sketched in figure~\ref{fig:eg_non_convex}.

\begin{figure}[ht]
    \centering
\includegraphics[width=0.8\textwidth]{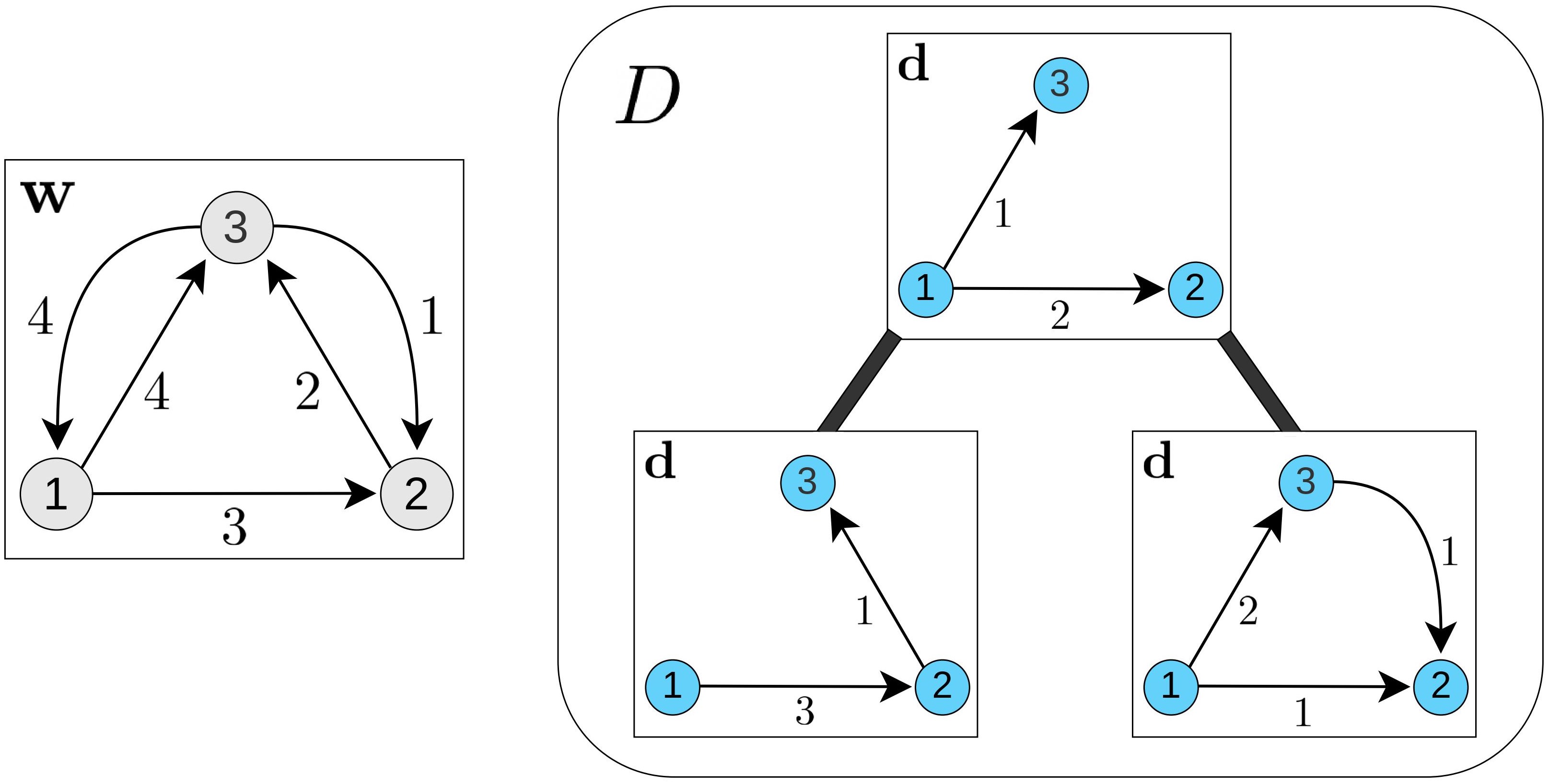}
\caption{ Depiction of the decomposition space $D$ of the network $\mathbf w$ shown on the left. We represent the networks corresponding to vertices of $D$ inside the squares and the thick black lines represent segments of $D$ connecting the vertices.}
\label{fig:eg_non_convex}
\end{figure}

\section{Decomposition space is a contractible polytope complex }
\label{app:decomposition_space}

This section relies heavily on the background presented in appendix~\ref{app:background}. 
First of all note that the linear constraints \eqref{eq:x_constrains} define a polytope $P$, as the set satisfying them is clearly bounded.  We will denote the argument minimum sets by
\[\amin (\vect \kappa) : =   \operatorname*{arg\,min}_{\mathbf x \in P} \sum_{ij\in \mathcal E_{\mathbf w}}\kappa_{ij}x_{ij},\]
which are faces of $P$ as defined in \eqref{eq:arg_min_def_F}. 
Then theorem~\ref{thm:CDFD_eqi_min_flow} states that $D$ is formed by the union of faces of $P$ that minimize some positive cost. We would like to show that the collection of these faces, which we denote by 
 \[\mathcal D := \{\amin (\vect \kappa) \;|\; \vect \kappa> 0\}\cup \{\emptyset\},\]
 forms a polytope complex. First note that polytopes, and in particular $P$, have finitely many faces, so the definition of $\mathcal D$ can actually be taken over finitely many costs. The rest is shown in the following result. 
 \begin{lem}
 We have 
 \begin{equation}
     \mathcal D = \{F \subset D \;|\; F \textnormal{ is a face of } P\}
     \label{eq:F_polytope_complex}
 \end{equation}
 and thus it is a polytope complex.
     \label{lem:F_polytope_complex}
 \end{lem}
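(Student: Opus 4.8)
The goal is to establish the set equality \eqref{eq:F_polytope_complex}, after which the polytope-complex property follows essentially for free from standard facts about the face lattice of a polytope (every face of a face is a face, and the intersection of two faces is a face of both — see appendix~\ref{app:background_polytope}). So the real content is the two inclusions.

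The inclusion $\mathcal D \subseteq \{F \subset D \mid F \text{ is a face of } P\}$ is the easier direction. Each $\amin(\vect\kappa)$ with $\vect\kappa>0$ is by definition a face of $P$, and by theorem~\ref{thm:CDFD_eqi_min_flow} it is contained in $D$ (indeed $D$ is the union of all such sets). The empty set is an improper face of $P$ and is trivially contained in $D$. So this direction is immediate from theorem~\ref{thm:CDFD_eqi_min_flow} and the definition of $\mathcal D$.

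For the reverse inclusion, I would take an arbitrary face $F$ of $P$ with $F \subset D$ and produce a positive cost vector $\vect\kappa$ with $F = \amin(\vect\kappa)$. Since $F$ is a face, there is \emph{some} linear functional (some $\vect\kappa_0 \in \mathbb R^{|\mathcal E_{\mathbf w}|}$, not necessarily positive) with $F = \amin(\vect\kappa_0)$. The task is to perturb $\vect\kappa_0$ into a strictly positive vector without changing the argmin. The natural device is to add a large multiple of a functional that is constant on $F$: because $F \subset D$, theorem~\ref{thm:CDFD_eqi_min_flow} tells us each point of $F$ is itself a directional part, hence acyclic, so $F$ lies in the face of $P$ cut out by some topological-ordering potential $\phi$ (as in the proof of theorem~\ref{thm:CDFD_eqi_min_flow}); the functional $\mathbf x \mapsto \sum_{ij} (\phi(j)-\phi(i)) x_{ij}$ has positive coefficients on edges used by points of $F$ and is constant (equal to the net-flow cost) on all of $D$. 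Setting $\vect\kappa = \vect\kappa_0 + \lambda \,(\phi(j)-\phi(i))_{ij}$ for large $\lambda$ keeps the same minimizing face $F$ while pushing all coordinates positive — one checks that for $\lambda$ large enough every coordinate $\kappa_{ij}>0$, and that adding a functional which is constant on the affine hull of $D \supset F$ does not move the argmin within $P$ (here one uses that off $D$ the original $\vect\kappa_0$ already strictly penalizes, and a careful bookkeeping with the reduced costs $\kappa_{ij}^\phi$ via Lemma in appendix~\ref{app:minimu_cost_equivalence} closes the gap).

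The main obstacle I anticipate is precisely this perturbation argument: one must verify that the large multiple of the potential functional does not accidentally enlarge or shrink the argmin set, and this requires knowing that the potential functional is constant not just on $F$ but on enough of $P$ (or at least that wherever it fails to be constant, $\vect\kappa_0$ already dominates). Handling edges that appear in $\mathbf w$ but in no point of $F$ — and ensuring their reduced costs stay nonnegative while their raw costs become positive — is the fiddly part, and is exactly the step mirrored in the second half of the proof of theorem~\ref{thm:CDFD_eqi_min_flow}. Once \eqref{eq:F_polytope_complex} is in hand, closure under taking faces is automatic (a face of a face $F \subset D$ is again a face of $P$ contained in $D$), and closure under intersection follows because the intersection of two faces of $P$ is a face of $P$ and is contained in $D$; hence $\mathcal D$ satisfies (i) and (ii) and is a polytope complex.
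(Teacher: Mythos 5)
Your first inclusion matches the paper's argument. For the reverse inclusion your strategy is genuinely different from the paper's and is, in outline, viable, but as written it has a real gap at the positivity step. The paper proceeds by a covering argument: since $F\subset D$ and $D$ is covered by the finitely many faces $\amin(\vect \kappa)$ with $\vect \kappa>0$, some $\vect \kappa_2>0$ satisfies $\dim(\amin(\vect \kappa_2)\cap F)=\dim F$, whence $F\subset \amin(\vect \kappa_2)$; then $\vect \kappa=\varepsilon\vect \kappa_1+(1-\varepsilon)\vect \kappa_2$, with $\vect \kappa_1$ any functional defining $F$ and $\varepsilon$ small, gives $\amin(\vect \kappa)=\amin(\vect \kappa_1)\cap\amin(\vect \kappa_2)=F$ by linearity. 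Your route instead builds the positive cost by hand from a topological ordering, mirroring the proof of theorem~\ref{thm:CDFD_eqi_min_flow}; this avoids the dimension argument but demands more care about which edges the potential actually helps with.

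Concretely, two points need repair. First, ``each point of $F$ is acyclic'' does not by itself yield a single topological ordering valid for all of $F$: you need that the union of the supports of points of $F$ is acyclic, which follows by taking a relative-interior point of $F$ (its support equals that union, and it lies in $D$, hence is acyclic). Second, and more seriously, the claim that for $\lambda$ large every coordinate of $\vect \kappa_0+\lambda\,(\phi(j)-\phi(i))_{ij}$ becomes positive is false: $\phi(j)-\phi(i)>0$ holds only on edges in the support of $F$; on other edges of $\mathbf w$ (e.g.\ the reversal of a support edge) the increment is $\leq 0$ and large $\lambda$ makes matters worse. You flag this for edges unused by $F$, but the proposed fix via the reduced-cost lemma only certifies $F\subseteq\amin(\vect \kappa)$, not the required equality. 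The clean repair is: (i) observe that $\sum_{ij}(\phi(j)-\phi(i))x_{ij}=\sum_i\phi(i)(w_i^\ind-w_i^\outd)$ is constant on all of $P$ (not merely on $D$), so adding $\lambda$ times it never changes $\amin$, and for large $\lambda$ it makes the cost positive on every support edge of $F$; (ii) then raise the cost on each non-support edge to any positive value --- since every point of $F$ carries zero flow there, this adds a nonnegative functional vanishing on $F$, so the minimum value is unchanged and the argmin can only shrink, hence remains exactly $F$. With these two additions your argument closes and gives a valid, more constructive alternative to the paper's proof.
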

 \begin{proof}
     We know that the set of  faces of $P $ forms a polytope complex, so it is clear that the  right hand side of \eqref{eq:F_polytope_complex} is also a polytope complex.  We have just argued above the lemma that $\mathcal D $ consists of faces of $P$ contained in $D$. 
     So all that is left to show is the other inclusion in \eqref{eq:F_polytope_complex}.

Letting $F \subset D$ be a proper face of $P$, we need to find $\vect \kappa> 0$ such that $\amin (\vect \kappa) = F$. 
     By definition of face, we know there exists $\vect \kappa_1$, not necessarily with all positive entries, such that $\amin (\vect \kappa_1) = F$. We also know that $\m D$ covers $D$  with finitely many faces of $P$ so there exists $\vect \kappa_2> 0$ such that  $\dim ( \amin(\vect \kappa_2) \cap F)=\dim F $. But then $\amin(\vect \kappa_2) \cap F$ is a face of $F$ having the maximal dimension, so  actually it is $F$. In particular, $F\subset \amin(\vect \kappa_2)$.
    Consider $\vect \kappa=\varepsilon \vect \kappa_1 + (1-\varepsilon) \vect \kappa_2$ with $0<\varepsilon<1$, small enough such that $\vect \kappa> 0$. This is a cost with the desired property as we have
    \[\amin (\vect \kappa)  = \amin (\vect \kappa_1)\cap \amin ( \vect \kappa_2) = F \]
    where in the first equality we have used linearity and that the intersection is non-empty.  

    The case of non-proper faces is trivial, so we have equality in  \eqref{eq:F_polytope_complex} as desired. 
 \end{proof}

We now reduce the topological study of polytope complexes to the more familiar simplicial complexes. We do this for $\m D$, but  all arguments can be applied to an arbitrary polytope complex\footnote{This can also be done using a triangulation of the polytope complex. However, this construction requires some arbitrary choices which would not be adequate for our subsequent proofs. }. Consider the abstract simplicial complex attained by replacing each cell of  $\mathcal D$ by a simplex with the same vertices, while also adding all its subsets\footnote{This step is needed as if $S$ is a simplex in a abstract simplicial complex, all its subsets must also be in the complex.  },
that is, 
\[\mathcal A =  \{ S \subset \ver \m D\; |\; S \subset \ver F \textnormal{ for  } F\in \mathcal D \}.\]
Intuitively it is clear that $\m D$ and $\m A$ are homotopically equivalent. Indeed, to go from $\m D$ to $\geo \m A$, we simply need to expand some cells to higher dimension by converting them to simplices, which leaves their intersection structure unchanged.

To show this rigorously though, we will need to use the version of the nerve lemma presented in lemma~\ref{lem:nerve}.

\begin{lem}
    $D$  is homotopically equivalent to  $\m A$, that is,
    \[ D   \simeq  \m A .\]
    \label{lem:D_simeq_S}
\end{lem}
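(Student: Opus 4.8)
The plan is to apply the Nerve Lemma (Lemma~\ref{lem:nerve}) twice, once to the polytope complex $\mathcal D$ and once to the abstract simplicial complex $\mathcal A$, and to observe that the two resulting nerves are simplicially isomorphic, hence have the same homotopy type. Concretely, I would take as the covering family $\mathcal X_{\mathcal D} = \Max \mathcal D$, the collection of maximal cells of $\mathcal D$. Each maximal cell is a polytope, hence convex and compact, so Lemma~\ref{lem:nerve} gives $D = \bigcup \mathcal D = \bigcup \Max \mathcal D \simeq \ner(\Max \mathcal D)$. On the other side, set $\mathcal X_{\mathcal A}$ to be the collection of (closed geometric) simplices corresponding to the maximal cells of $\mathcal D$ inside $\geo \mathcal A$; these are again convex and compact, and their union is $\geo \mathcal A$ (every simplex of $\mathcal A$ is a face of one coming from a maximal cell of $\mathcal D$, by the definition of $\mathcal A$ and the fact that faces of a cell have vertex sets contained in that of the cell). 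So Lemma~\ref{lem:nerve} gives $\mathcal A \simeq \ner(\mathcal X_{\mathcal A})$.

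The crux is then to show $\ner(\Max \mathcal D) \cong \ner(\mathcal X_{\mathcal A})$ as abstract simplicial complexes. There is an obvious candidate bijection $f$ on vertices: a maximal cell $F \in \Max \mathcal D$ is sent to the simplex $\sigma_F \in \mathcal X_{\mathcal A}$ built from its vertex set $\ver F$. I would show $f$ is a simplicial isomorphism by checking that a subfamily $\{F_1,\dots,F_k\}\subset \Max\mathcal D$ has $\bigcap_i F_i \neq \emptyset$ if and only if $\bigcap_i \sigma_{F_i} \neq \emptyset$. For the forward direction: if the cells $F_i$ share a common point, then in particular they share a common vertex $v$ of $P$ (the intersection $\bigcap_i F_i$ is itself a face of $P$, being an intersection of faces in a polytope complex, hence nonempty faces contain vertices), and $v \in \ver F_i$ for all $i$, so $\{v\}$ — viewed as a vertex of $\geo\mathcal A$ — lies in every $\sigma_{F_i}$. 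For the converse: if the simplices $\sigma_{F_i}$ intersect in $\geo\mathcal A$, then since distinct simplices of a geometric simplicial complex meet exactly in a common face, they share a common vertex $v \in \bigcap_i \ver F_i$; that same vertex $v$ of $P$ then lies in each polytope $F_i$ (a vertex in $\ver F_i$ is a point of $F_i$), so $\bigcap_i F_i \ni v$ is nonempty. Chaining the three equivalences yields $D \simeq \ner(\Max\mathcal D) \cong \ner(\mathcal X_{\mathcal A}) \simeq \mathcal A$, which is the claim.

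I expect the main obstacle to be the careful bookkeeping in the converse direction of the nerve-isomorphism step: one must be sure that two simplices of $\geo \mathcal A$ that intersect do so in a common \emph{vertex} rather than merely in some interior point, and this relies on $\geo\mathcal A$ genuinely being a geometric simplicial complex (so that overlaps are faces) together with the fact that a nonempty common face of simplices has at least one vertex. A secondary subtlety is confirming that $\bigcap_i F_i$, when nonempty, actually contains a vertex of $P$: this follows because an intersection of faces of $P$ is a face of $P$ (property (ii) of a polytope complex applied within the face lattice of $P$, which we know from Lemma~\ref{lem:F_polytope_complex} is the relevant structure), and every nonempty polytope is the convex hull of its vertices, so has at least one. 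Once these two points are nailed down the rest is formal. One should also remark, to be safe, that $\mathcal A$ is genuinely an abstract simplicial complex — closed under taking subsets — which is immediate from its definition, and that passing from maximal cells to all cells does not change the union, so restricting the nerve computation to maximal cells on the $\mathcal D$ side is legitimate by the remark following the definition of $\bigcup\mathcal K$.
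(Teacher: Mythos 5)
Your proposal is correct and follows essentially the same route as the paper's proof: apply the Nerve Lemma to the maximal cells of $\mathcal D$ and to the corresponding maximal simplices of $\geo\mathcal A$, then identify the two nerves via the observation that cells (resp.\ simplices) have non-empty intersection precisely when they share a common vertex. Your extra care about multi-fold intersections being faces and hence containing vertices is a point the paper passes over more quickly, but the argument is the same.
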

\begin{proof}
Recalling that the set of maximal cells cover a complex and using the nerve lemma we have
\[D = \bigcup \Max \m D\simeq \ner ( \Max \m D) \hspace{0.8cm}\textnormal{and}\hspace{0.8cm}\m A  \cong \geo \m A  = \bigcup \Max (\geo \m A) \simeq \ner (\Max (\geo \m A)).\]
The definition of $\m A$ induces a  bijection $f$ between the maximal cells of $\m D$ and the maximal cells of $\geo \m A$. Note that the intersection of cells is non-empty iff they have a common vertex, as cells intersect in faces. It is clear then that the intersection of maximal cells in  $\mathcal D$ will be non-empty precisely when the same is true for their image under $f$. That is, $f$ is a simplicial isomorphism between the two nerve complexes. Thus, we can conclude that $D \simeq \m A$.
\end{proof}

Now we want to exploit the simplicial structure of $\m A$ to show that it is contractible. We will do this by basically showing that there is a cover of $\{\vect \kappa> 0 \}$, $\mathcal C$, such that $\ner \m C$ is isomorphic to $\m A$ so that then by the nerve lemma, 
\[\m A \simeq \ner \m C \simeq \{\vect \kappa> 0\}\simeq \{*\},\]
where the cost set is contractible as it is convex. However, this approach does not quite work as to be able to apply the nerve lemma we need to work with compact sets. So instead we do the same argument with costs $\{\varepsilon \leq \vect \kappa\leq   m\}$ with $\varepsilon> 0$ small enough and $ m$ large enough such that there is at least one cost that attains each  element of $\m D$. 

\begin{thm}
    $D$ is contractible. 
    \label{thm:D_contractible}
\end{thm}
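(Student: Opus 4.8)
The plan is to follow the sketch given just above the statement: exhibit $\m A$ as the nerve of a closed cover of a compact, convex box of cost vectors, so that the nerve lemma (Lemma~\ref{lem:nerve}) forces $\m A$ to be contractible, and then transfer this to $D$ via Lemma~\ref{lem:D_simeq_S}. Fix $\varepsilon>0$ and $m>0$ with $\varepsilon\le m$, chosen so that every cell of $\m D$ can be written as $\amin(\vect\kappa)$ for some $\vect\kappa$ in the box
\[K:=\{\,\vect\kappa \mid \varepsilon\le\kappa_{ij}\le m \text{ for all } ij\in\mathcal E_{\mathbf w}\,\};\]
such $\varepsilon,m$ exist because $\m D$ is finite and $\amin$ is invariant under multiplying $\vect\kappa$ by a positive scalar. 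For each vertex $v\in\ver\m D$ set
\[C_v:=\{\,\vect\kappa\in K \mid v\in\amin(\vect\kappa)\,\},\qquad \m C:=\{\,C_v \mid v\in\ver\m D\,\}.\]

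First I would check that $\m C$ is a cover of $K$ by convex, compact, non-empty sets, so that the nerve lemma applies. Convexity and compactness of $C_v$ follow from noting that the condition $v\in\amin(\vect\kappa)$ is equivalent to the finite system of linear inequalities $\sum_{ij}\kappa_{ij}v_{ij}\le\sum_{ij}\kappa_{ij}u_{ij}$ over the finitely many vertices $u$ of $P$, so $C_v$ is itself a polytope. It is non-empty because the singleton $\{v\}$ is a face of a cell of $\m D$ containing $v$, hence itself a cell of $\m D$ (property (i) of polytope complexes), hence of the form $\amin(\vect\kappa)$ with $\vect\kappa\in K$ by our choice of $\varepsilon,m$. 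Finally $\bigcup\m C=K$: for any $\vect\kappa\in K\subset\{\vect\kappa>0\}$ the set $\amin(\vect\kappa)$ is a non-empty face of $P$ and so contains some vertex $v$, whence $\vect\kappa\in C_v$.

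The crux is the identification $\ner\m C=\m A$ as abstract simplicial complexes on the vertex set $\ver\m D$. For $S\subset\ver\m D$ one has $\bigcap_{v\in S}C_v=\{\,\vect\kappa\in K \mid S\subset\amin(\vect\kappa)\,\}$. If this is non-empty, pick such a $\vect\kappa$; then $\amin(\vect\kappa)\in\m D$ by definition of $\m D$, and $S\subset\ver(\amin(\vect\kappa))$ shows $S\in\m A$. Conversely, if $S\in\m A$ then $S\subset\ver F$ for some cell $F\in\m D$; the smallest face $F_S$ of $P$ spanned by $S$ is then a face of $F$, hence itself a cell of $\m D$ since $\m D$ is a polytope complex (Lemma~\ref{lem:F_polytope_complex}), hence $F_S=\amin(\vect\kappa)$ for some $\vect\kappa\in K$ by our choice of $\varepsilon,m$, and since $S\subset F_S$ this $\vect\kappa$ lies in $\bigcap_{v\in S}C_v$. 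Thus $\bigcap_{v\in S}C_v\neq\emptyset\iff S\in\m A$, i.e.~$\ner\m C=\m A$.

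It then only remains to chain the equivalences: by the nerve lemma $\m A=\ner\m C\simeq\bigcup\m C=K$, and $K$ is convex, hence contractible, so $\m A\simeq\{*\}$; by Lemma~\ref{lem:D_simeq_S}, $D\simeq\m A\simeq\{*\}$, proving $D$ contractible. I expect the main obstacle to be the careful handling of the compactness constraint imposed by the nerve lemma: one cannot run the argument directly on the open, unbounded cone $\{\vect\kappa>0\}$, so one must first justify that a single box $K$ still realizes every cell of $\m D$, and then verify that each restricted set $C_v$ remains a genuine polytope and that passing to $K$ leaves the nerve unchanged — this is exactly where the choice of $\varepsilon$ and $m$, together with the characterization of $\m D$ in Lemma~\ref{lem:F_polytope_complex}, does the work.
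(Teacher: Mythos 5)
Your proposal is correct and follows essentially the same route as the paper's proof: restrict to a compact box $K$ of costs, cover it by the convex compact sets $C_v=\{\vect\kappa\in K\mid v\in\amin(\vect\kappa)\}$, identify $\ner\m C$ with $\m A$ using Lemma~\ref{lem:F_polytope_complex}, and conclude via the nerve lemma and Lemma~\ref{lem:D_simeq_S}. The only cosmetic difference is that you phrase the identification as an equality of abstract complexes via the minimal face $F_S$ spanned by $S$, where the paper phrases it as a simplicial isomorphism $f$; the substance is identical.
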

\begin{proof}
    By lemma~\ref{lem:D_simeq_S} it is enough to show that $\m A$ is contractible. Now for $v\in \ver \m A$ we define
    \[f(v):= \{\varepsilon \leq \vect \kappa\leq   m \; |\; v\in \amin(\vect \kappa)\}=  \left \{\varepsilon \leq \vect \kappa\leq  m \; |\; \sum_{ij\in \mathcal E_{\mathbf w}}\kappa_{ij}v_{ij} = \min_{\mathbf x \in P }\sum_{ij\in \mathcal E_{\mathbf w}}\kappa_{ij}x_{ij} \right\}.\]
    We note that this set is  bounded, closed and convex, as the functions in the equality are linear in $\vect \kappa$ and thus also continuous.
    Moreover, as every cell $\amin (\vect \kappa)$ contains a vertex,  $f(\ver \m A)$ covers $\{\varepsilon \leq \vect \kappa\leq m\}$,
    so by the nerve lemma we get, 
\[\ner f(\ver \m A) \simeq \bigcup f(\ver \m A) = \{\varepsilon \leq \vect \kappa\leq   m\} \simeq \{*\} . \]
So to finish the proof it is enough to show that that $f:\m A\rightarrow \ner f(\ver \m A)$ is a simplicial isomorphism. First note that by lemma~\ref{lem:F_polytope_complex}, we can find costs $\vect \kappa_v$ such that $\amin (\vect \kappa_v) = v$ for each vertex, which guarantees the bijectivity of  $f:\ver \m A \rightarrow f(\ver \m A)$. 
Now if $S\subset \ver \m A$ then, 
\begin{align*}
  S\in \m A & \iff \exists \vect \kappa,\;  \varepsilon\leq \vect \kappa\leq m \; \textnormal{ s.t. } S \subset \ver (\amin(\vect \kappa)) \\
  &\iff \exists \vect \kappa,\;  \varepsilon\leq \vect \kappa\leq  m \; \textnormal{ s.t. } \vect \kappa\in f(v) \;\; \forall v\in S \\
  &\iff \bigcap f(S) \neq \emptyset\iff f(S)\in \ner f(\ver \m A) ,
\end{align*}
so that $f$ is a simplicial isomorphism. 
\end{proof}

We note that the map $f$ in the proof is closely related to the concept of dual or polar of a polytope \cite{convex_polytopes}; however, we do not explore this further here. 

There are many approaches one can take to compute the space $D$ in practice. A simple method is to find first all the vertices of the polytope $P$ (see \cite{all_vertices} for a comparison between search methods). Then simply check which of the networks corresponding to these vertices are acyclic to find the vertices of $D$ and which convex combination of them are acyclic to find all the cells of $\m D$. This will however be a computationally intensive process due to the exponential number of combinations needed to be checked. 

\section{Minimal circularity}
\label{app:minimum_circularity}

In this section we show that finding minimum circularity is at least as hard as the Hamiltonian cycle problem. This is a well known NP-hard problem, even for sparse networks, and the currently best algorithms take exponential times in the number of nodes \cite{Hamiltonian_cycles}. Thus, the same will hold for the minimum circularity problem, once we show that it can be used to solve the Hamiltonian cycle problem.

First recall that the Hamiltonian cycle problem is to determine if an undirected unweighted graph $\mathbf g$ with $n$ nodes has a cycle visiting each node exactly once. Let $i$ be one of its nodes and consider then the directed weighted graph $\mathbf w$ that we obtain by replacing all edges in $\mathbf g$ by two directed edges of weight 1 in opposite directions, and with three extra nodes $i'$, $a$ and $b$. We take $i'$ to be a copy of $i$, that is, it is connected to the nodes $\mathcal N_{\mathbf g}\setminus \{i\}$ in the same way as $i$, and $a$, $b$ have the unique directed edges $ai$, $i'b$ also of weight 1. Now the directional parts of $\mathbf w$ have to simply send a unit of flow from $a$ to $b$ in an unrestricted manner while being acyclic. So it is clear that the maximal directional part (which corresponds to minimal circular one) will send all this flow through the longest path\footnote{That is, a sequence of distinct nodes where consecutive nodes are connected by following directed edges. } from $a$ to $b$. If this path goes through every node of $\mathbf w$, it will create a Hamiltonian cycle in $\mathbf g$ by merging $i$ and $i'$. 
Otherwise, the longest cycle containing $i$ in $\mathbf g$ does not contain all its nodes, so no Hamiltonian cycle exists. We can conclude that minimal circularity is a NP-hard problem.

\section{In-depth look at BFF algorithm}
\label{app:BFF}

\subsection{Background on Random walks}
\label{sec:random_walks}
Given a weighted directed network $\mathbf w$ we define the transition probabilities of a discrete time random walk in it by
\begin{equation}
   \bar{  w}_{ij} :=
\left\{
	\begin{array}{ll}
		 w_{ij}/ w_i^\outd  & \mbox{if }  w_i^\outd > 0 ,\\
		0 & \mbox{if }  w_i^\outd = 0 \mbox{ and } j\not = i,\\
        1 & \mbox{if }  w_i^\outd = 0 \mbox{ and } j= i.
	\end{array}
\right.
\label{eq:transition_matrix}
\end{equation}
This matrix defines the transition of the probability distributions $\mathbf a(t)$ between steps of a random walk  by 
\[\mathbf a(t+1)= \mathbf a(t) \bar {\mathbf w},\]
where $\mathbf a(0)$ represents the initial probability distribution on the nodes. Here by probability distribution we mean row vectors with non-negative entries adding up to 1. Of special interest are the \emph{stationary distributions} since the average of the process, $\frac{1}{t}\sum_{s=0}^{t-1}a(s)$, will always converge to one of them. They are characterized by being those  \emph{invariant} vectors, $\mathbf x$, that is 
\begin{equation} \mathbf x  \bar{\mathbf w} = \mathbf x , 
\label{eq:invariant} 
\end{equation}
that are also probability distributions, i.e.~ $\sum_{i\in \mathcal N_\mathbf w} x_i = 1$ and $x_{i} \geq 0$. 

A walk in a network  is a sequence of nodes where consecutive nodes are connected by following directed edges.  We say that a network is \emph{strongly connected} if there exist walks in both directions between any pair of nodes. Such networks have a unique stationary distribution, which in turn  has all positive entries.  For a general network $\mathbf w$,  having a walk in both directions defines an equivalence relation on its nodes, so that we can decompose $\mathbf w$ into its SCCs, or for us simply \emph{components}. A component is called \emph{absorbing} if it has no outgoing edges to other components. Then the unique stationary distribution of an absorbing component gives us an invariant vector of $\bar {\mathbf w}$, by setting all entries outside the component to zero. These vectors give a basis for the space of invariant vectors satisfying \eqref{eq:invariant}.

\subsection{Alternative motivation for BFF and proofs}
Suppose that we are simply interested in finding some circular subnetwork $\tilde {\mathbf c}$ of $\mathbf w$, such that it preserves the edge proportions as in \eqref{eq:ratio}. Then,  $\tilde c_{ij} = c_i^\outd \bar{w}_{ij}$ so that $\tilde {\mathbf c}$ is completely determined by the out-weight row vector. Moreover, the computations in \eqref{eq:a^in} done for $\tilde {\mathbf c}$ instead of $\mathbf a (t)$ yield $\tilde {\mathbf c}^\ind = \tilde {\mathbf c}^\outd \bar{\mathbf w}$. 
Thus, the condition of being circular or balanced, $ \tilde {\mathbf c}^\ind = \tilde {\mathbf c}^\outd$, is precisely, 
\[ \tilde {\mathbf c}^\outd \bar{\mathbf w} = \tilde {\mathbf c}^\outd\]
i.e. $\tilde {\mathbf c}^\outd$ is invariant under $\bar{\mathbf w}$. 

In principle we can choose any invariant vector with $0\leq \tilde {\mathbf c}^\outd \leq \mathbf w^\outd$. However, if in algorithm~\ref{alg:Comprsession_BFF} we choose $\tilde {\mathbf c}$ in this manner, instead of using  \texttt{Maximal\_Invariant},
 we need the remainder flow $ \mathbf w - \tilde {\mathbf c} $ to have a sink node to guarantee convergence. We can achieve this by taking $\tilde {\mathbf c}^\outd$ to be a scaling of any stationary distribution, such that  ${\mathbf c}^\outd \leq \mathbf w^\outd$ has a binding\footnote{That is, $\mathbf c^\outd\leq \mathbf w^\outd$ and for some node $ c^\outd_i =  w^\outd_i $. } inequality.
Importantly, despite having freedom when choosing $\tilde {\mathbf c} $ in each iteration  the final cumulative circular flow, $\mathbf c$, will always be the same, as we proceed to show. 

First note that each  cycle is contained in a single component, so that all circularity flows within components and not between them. Let $\mathcal C$ be the nodes in a component of $\mathbf w$ and denote by ${\mathbf w}_\mathcal{C}$ its restriction to these nodes. Then, for any invariant vector of $\bar{\mathbf w}$, its restriction to $\mathcal C$, $\tilde {\mathbf c}_{\mathcal C}^\outd $, will be an invariant vector of $\bar {\mathbf w}_\mathcal{C}$. Recall that in a component there is a unique invariant stationary distribution, $\mathbf s$, so $\tilde {\mathbf c}_\mathcal{C}^\outd$ must be a scaling of it. We show that algorithm~\ref{alg:Comprsession_BFF} will eventually remove the only scaling of $\mathbf s$ with a binding inequality $\lambda \mathbf s\leq  {\mathbf w}_\mathcal{C}^\outd$, regardless of the method used to choose $\tilde {\mathbf c}$. Assume that this does not occur in the first iteration so $\tilde {\mathbf c}_\mathcal{C}^\outd< {\mathbf w}_\mathcal{C}^\outd$, or equivalently that we do not create sink nodes in $\mathcal C$ in this iteration. Then the stationary distribution in $\mathcal C$ in the next iteration of the algorithm will be unchanged since for edges $ij$ in it, 
\[[\overbar{\mathbf w -\tilde{\mathbf c}}]_{ij} = \frac{w_{ij}-\tilde{c}_{ij}}{w_i^\outd - \tilde{c}_i^\outd} = \frac{w_{ij}-\tilde{c}_i^\outd \bar w_{ij}}{w_i^\outd - \tilde{c}_i^\outd} =\frac{w_i^\outd w_{ij}-\tilde{c}_i^\outd  w_{ij}}{w_i^\outd(w_i^\outd - \tilde{c}_i^\outd)} = \bar{w}_{ij}, \]
as $c^\outd_i \not = w^\outd_i$.
This will continue until at some  iteration we create a sink\footnote{Since we know the algorithm terminates this must occur at some iteration.} in $\mathcal C$, at which point all the out flows removed are scalings of $\mathbf s$ which add up precisely to $\lambda \mathbf s$.

In conclusion for each component we will eventually remove the unique flow coming from the maximal invariant vector of it. Once this flow is removed for a particular component the removal of sinks in the next iteration will  reduce its  size and  may even subdivide it. The resulting components undergo then the same process. 

Besides showing that the final convergence is independent of the choices of $\tilde {\mathbf c}$ the above argument hints at a recursive implementation of the BFF algorithm, which is described in  algorithm~\ref{alg:BFF_recursive}. As in algorithm~\ref{alg:Comprsession_BFF}, when adding networks of different sizes, we simply add the weights of matching edges, where the node labels are kept from the original network $\mathbf w$.
\begin{algorithm}[hbt!]
\caption{Recursive implementation of BFF algorithm, returning the circular part.  }
\label{alg:BFF_recursive}
 \DontPrintSemicolon
\SetKwFunction{Fbff}{BFF}
\SetKwFunction{Fcomp}{Components}
\SetKwFunction{Fmaxinv}{Maximal\_Invariant}
\SetKwProg{Pn}{Function}{:}{\KwRet $c$}
  \Pn{\Fbff{$w$}}{
$c = 0$\\
\If{$w$ has an edge}{
    \For{component $\mathcal C$ of $w$}{
    $\tilde c =$ \Fmaxinv{$w_{\mathcal C}$}\\
    $c = c + \tilde {c} $ $+$ \Fbff{$w_{\mathcal C}-\tilde c$}\\
    }
}
}
\end{algorithm}

A big advantage of this approach is that  $\mathbf w_{\mathcal C}$ has a unique invariant vector (up to scaling) and thus  finding  \texttt{Maximal\_Invariant} via standard linear algebra techniques is straight-forward and fast.  Moreover, the recursivity of the method quickly reduces the dimensionality of it. All this makes this approach computationally more efficient in practice and it is the one we roughly take in our code.  

To compute its complexity, note that finding the stationary distribution to compute \texttt{Maximal\_}\allowbreak\texttt{Invariant} is the bottle-neck. Indeed, one can  find the components in linear time on the number of nodes and edges \cite{Tarjan_1972},
whereas finding the stationary distribution is  nearly linear  \cite{time_stationary_dist_2018} for approximate solution. There are exact methods that take $\tilde O (m^{3/4}n +m n ^{2/3})$ where $n =|\mathcal N_{\mathbf w}|$, $m = |\mathcal E_{\mathbf w}| $ and the $\tilde O$ notation suppresses polylog factors, see \cite{time_stationary_dist_2016}. Now, each time we call \texttt{Maximal\_Invariant} we create a sink node, which in the next iteration will be part of a singleton component. 
So we call this function on non-singleton networks at most as many times as there are nodes and thus algorithm~\ref{alg:BFF_recursive} takes $\tilde O (m^{3/4}n^2 +m n ^{5/3})$ and in particular polynomial time. 

From algorithm~\ref{alg:BFF_recursive}, we can also deduce that $\mathbf c$ contains all cycles of $\mathbf w$. Indeed, each cycle will be contained in a component $\mathcal C$. Then, as $\bar{\mathbf w}_{\mathcal C }$ is strongly connected its stationary distribution  will have positive weight on all its nodes. Thus, \texttt{Maximal\_Invariant(}${\mathbf w}_{\mathcal C }$\texttt{)} will have the same edges as ${\mathbf w}_{\mathcal C }$ but with different weights and in particular they will contain the same cycles. 

To finish this section we prove theorem~\ref{thm:maximal_invariant}.

\begin{proof}[Proof of theorem~\ref{thm:maximal_invariant}]
As the function defining the dynamics of $\mathbf a^\outd(t)$ is continuous we know that its limit is a fixed point, that is
    \[\min (\tilde {\mathbf c}^{\outd} \bar {\mathbf w}, \tilde{\mathbf c }^\outd)= \tilde {\mathbf c}^\outd  \]
    where the minimum is taken entrywise. 
    It follows that $\tilde {\mathbf c}^{\outd} \bar {\mathbf w} \geq \tilde {\mathbf c}^\outd $,  and as $\bar {\mathbf w}$ is a stochastic matrix, i.e. its rows add up to 1, we have,
    \[\sum_{i\in \mathcal N_{\mathbf w}} [\tilde {\mathbf c}^{\outd} \bar {\mathbf w}]_i = \sum_{i\in \mathcal N_{\mathbf w}}\sum_{j\in \mathcal N_{\mathbf w}}  \tilde  c^{\outd}_j\bar { w}_{ji} = \sum_{j\in\mathcal N_{\mathbf w}}  \tilde  c^{\outd}_j \sum_{i\in \mathcal N_{\mathbf w}}\bar { w}_{ji} =\sum_{i\in \mathcal N_{\mathbf w}} \tilde c_i^\outd. \]
     We conclude that $ \tilde {\mathbf c}^{\outd} \bar {\mathbf w} = \tilde {\mathbf c}^\outd $ so it is invariant. 

    To show that $\mathbf c ^\outd$ is maximal we first prove that $0\leq \mathbf a ^\outd(t)-\mathbf x$ by  induction. First note that  $\mathbf x\leq \mathbf w^{\outd} =  \mathbf a^{\outd}(0)$. Also note that from the minimum in the dynamics, either $a_i ^\outd(t+1)-x_i =  a_i ^\outd(t)-x_i$ which is non-negative by induction, or
    \[ a_i ^\outd(t+1)-x_i = [\mathbf a ^\outd(t) \bar{\mathbf w}]_i -x_i = [(\mathbf a ^\outd(t)  -\mathbf x)\bar{\mathbf w}]_i\geq 0, \]
    where we used that $\mathbf x$ is invariant, the induction hypothesis and $\mathbf {\bar w}\geq 0$.
    Now taking limits we obtain $\mathbf x\leq \tilde {\mathbf c }^\outd$ as desired.
\end{proof}

\section{Random networks}
\label{app:Random_networks}

There are many ways to generate random networks but perhaps the most widely used one is the Erdős–Rényi model. This term is used for two closely related ensembles of undirected unweighted networks, here we present their directed unweighted versions. The \emph{uniform} model $G_{n,m}$, generates random unweighted networks with $n$ nodes and $m$ directed edges, chosen uniformly in the set of all possible edges $\mathcal L = \{ij \; | \; i\not = j \textnormal { and } 1\leq i, j\leq n\}$. The \emph{binomial} model $G^{n,p}$, generates random unweighted networks with $n$ nodes, where  each possible directed edge $ij\in \mathcal L$ is included independently with probability $p$. One is usually interested in the expected properties of these networks when $n$ tends to infinity, see \cite{Random_graphs}.

We propose a slight modification to the uniform model, to obtain weighted networks. In $W_{n,w}$ networks of $n$ nodes are generated by uniformly sampling with replacement $w$ edges in $\mathcal L$. The weight of each edge is then given by the number of times they have come up in the sampling, where  weight zero represents the absence of an edge.   Thus, a generated network may have less than $w$ edges but the sum of its weights is always $w$. 

First we show that for sparse networks $W_{n,w}$ is very similar to $G_{n,w}$, as we expect few edges with weight larger than 1. To this end note that $|\mathcal L| = n(n-1)$ which we denote by $l$ and use throughout this section.

\begin{lem}
Let $Y$ be the sum of the edge weights bigger than 1 in $W_{n,w}$, and assume that $w = kn$ with $k>0$ fixed. Then $\mathbb EY$ is asymptotically bounded above by $k^2$. 
\label{lem:total_weight>1}
\end{lem}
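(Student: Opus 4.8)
The plan is to compute $\mathbb{E}Y$ essentially exactly by exploiting the multinomial structure of $W_{n,w}$, and then read off the asymptotics. Write $l := |\mathcal L| = n(n-1)$ and let $W_e$ denote the weight of the edge $e \in \mathcal L$. Since the $w$ edges are sampled uniformly with replacement, $(W_e)_{e\in\mathcal L}$ is multinomial with $w$ trials and uniform cell probabilities $1/l$; in particular each $W_e$ is $\mathrm{Bin}(w,1/l)$, and $Y = \sum_{e\in\mathcal L} W_e\,\mathbf 1[W_e\ge 2]$ is the total weight carried by multiply-hit edges.

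First I would use linearity of expectation to pass to a single edge: $\mathbb E Y = l\,\mathbb E\big[W\,\mathbf 1[W\ge 2]\big]$ with $W\sim\mathrm{Bin}(w,1/l)$. Then I would evaluate $\mathbb E\big[W\,\mathbf 1[W\ge 2]\big] = \mathbb E[W] - \mathbb P[W=1] = \tfrac{w}{l} - \tfrac{w}{l}\big(1-\tfrac1l\big)^{w-1}$, since the only value of $W$ in $\{0,1\}$ that contributes to $\mathbb E[W]$ but not to $\mathbb E[W\mathbf 1[W\ge2]]$ is $W=1$. This gives the closed form $\mathbb E Y = w\big(1 - (1-\tfrac1l)^{w-1}\big)$.

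Next I would bound this from above. Bernoulli's inequality gives $(1-\tfrac1l)^{w-1}\ge 1-\tfrac{w-1}{l}$, hence $\mathbb E Y \le w\cdot\tfrac{w-1}{l} \le \tfrac{w^2}{l}$. Substituting $w = kn$ and $l = n(n-1)$ yields $\mathbb E Y \le \dfrac{k^2 n^2}{n(n-1)} = k^2\cdot\dfrac{n}{n-1} \to k^2$ as $n\to\infty$, which is the claimed asymptotic upper bound. (Expanding $(1-\tfrac1l)^{w-1}\approx e^{-w/l}\approx 1-\tfrac{k}{n-1}$ shows the bound is tight, i.e.\ in fact $\mathbb E Y \to k^2$, consistent with the power-law fits reported in section~\ref{sec:circularity_random}.)

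I do not anticipate any real obstacle; the work is entirely book-keeping, the main care being to translate ``sum of the edge weights bigger than $1$'' correctly into $\sum_e W_e\mathbf 1[W_e\ge 2]$ and to subtract the $W=1$ term rather than the $W=0$ term. If one wishes to avoid the exponential factor altogether, an equivalent route is purely combinatorial: for $W_e\ge 2$ one has $W_e \le W_e(W_e-1) = 2\binom{W_e}{2}$, so $Y \le 2\sum_e \binom{W_e}{2}$, and $\mathbb E\big[\sum_e\binom{W_e}{2}\big] = \binom{w}{2}/l$ simply counts unordered pairs of the $w$ draws landing on a common edge, giving again $\mathbb E Y \le w(w-1)/l \le w^2/l$.
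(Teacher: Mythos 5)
Your proof is correct, but it takes a genuinely different route from the paper's. The paper argues sequentially over the $w$ draws: the $i$th sample collides with an already-occupied edge with probability at most $(i-1)/l$, and each collision increases $Y$ by at most $2$ (by $2$ when an edge is hit for the second time, by $1$ thereafter), giving $\mathbb{E}Y \le 2\sum_{i=1}^{w}(i-1)/l = w(w-1)/l \to k^2$. You instead exploit the multinomial structure to compute $\mathbb{E}Y = w\bigl(1-(1-\tfrac1l)^{w-1}\bigr)$ exactly and then apply Bernoulli's inequality; this lands on the very same numerical bound $w(w-1)/l$, but as a bonus the closed form shows the bound is tight, i.e.\ $\mathbb{E}Y \to k^2$ rather than merely $\limsup \mathbb{E}Y \le k^2$, which the paper's one-sided collision count does not immediately give. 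Your parenthetical combinatorial variant ($Y \le 2\sum_e \binom{W_e}{2}$ and $\mathbb{E}\sum_e\binom{W_e}{2} = \binom{w}{2}/l$) is essentially the paper's argument in disguise: both count colliding pairs of draws and charge each collision a factor of $2$. All steps check out, including the identity $\mathbb{E}[W\,\mathbf 1[W\ge 2]] = \mathbb{E}[W]-\mathbb{P}[W=1]$ and the translation of ``sum of the edge weights bigger than $1$'' into $\sum_e W_e\,\mathbf 1[W_e\ge 2]$, which matches the paper's reading.
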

\begin{proof}
Sampling the edges sequentially, the probability that in the $i$th sample we choose an edge already present is at most $(i-1)/l$, which would increase $Y$ by 2 or 1. Thus, 
\[\mathbb E Y \leq 2\sum_{i = 1}^w \frac{i-1}{l} = \frac{nk(nk-1)}{n(n-1)} \longrightarrow k^2  \]
as $n\to \infty$, where we have used the arithmetic sum formula. 
\end{proof}

Note that in the lemma's setting the total weight $w$ tends to infinity in contrast to $\mathbb E Y$. 

Staying in the sparse regime we study how many short cycles we expect. Let $X_i$ be the number of distinct cycles of length $i$ and as we will move between probability spaces we denote by $X_i(W_{n,w})$ the corresponding random variable in the probability space $W_{n,w}$, similarly for $G_{n,m}$ and $G^{n,p}$. 
It is easiest to count cycles in $G^{n,p}$ since networks formed by subsets of its nodes also behave as a binomial ensemble, in particular, 
\[ \mathbb E X_i(G^{n,p}) =\binom{n}{i} \mathbb E X_i(G^{i,p}). \]
Note that cycles of length $i$ in a network with $i$ nodes are simply permutations of the nodes where the starting node is arbitrary so there are $(i-1)!$ possible cycles. Then, as the probability of all the edges of a cycle being present is $p^i$ we have $\mathbb E X_i(G^{i,p}) = (i-1)! p^i$. In conclusion, 
\begin{equation}
\mathbb E X_i(G^{n,p}) = \frac{n (n-1) \cdots (n-i+1)}{i}p^i\leq np\frac{((n-1)p)^{i-1}}{i} \leq  2\frac{k^i}{i}
\label{eq:EX_i(G)}
\end{equation}
where in the last equality we have assumed that we are in the sparse regime $p  = k/(n-1)$ for $k>0$ fixed. 
Note that in this regime the expected number of edges is $m = kn$ and the expected average in/out-degree is $ k$.
As $G^{n,p}$ is closely related to $G_{n,m}$ one can show that  $\mathbb E X_i(G_{n, m})$ is asymptotically bounded above by $2k^i/i$ for any fixed $i$. 
We do not show the details here as we will do exactly the same argument applied to circularity in the proof of the next proposition. Finally note that repeating edges cannot increase $X_i$ so that $\mathbb E X_i(W_{n, m})\leq \mathbb E X_i(G_{n, m})$ and thus is also asymptotically bounded by $2k^i/i$. 

We now show that in very sparse networks we expected zero circularity. We will need the following technical result, see \cite[Corollary 2.3]{Random_graphs}. 

\begin{lem}[Chernoff’s bound]
If $X$ be a binomial random variable and $0<\varepsilon\leq 3/2$, then
\[\mathbb P\left ( |X-\mathbb E X| \geq \varepsilon \mathbb E X \right ) \leq 2 \exp   \left (- \frac{\varepsilon^2 \mathbb E X}{3}\right ).\]
\label{lem:chrenoff}
\end{lem}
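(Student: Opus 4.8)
The plan is the standard exponential-moment (Chernoff–Bernstein) argument: bound the moment generating function, apply the exponential Markov inequality to each tail, optimize over the free parameter, and then control the resulting exponents by $-\varepsilon^2/3$ on the stated range. First I would write $\mu := \mathbb E X$ and $X = \sum_{i=1}^{n} B_i$ with the $B_i$ independent Bernoulli($p$), so that $\mathbb E[e^{tX}] = (1 + p(e^t-1))^n$ for every real $t$; using $1+x \le e^x$ with $x = p(e^t-1)$ then yields the convenient bound $\mathbb E[e^{tX}] \le \exp\big(\mu(e^t - 1)\big)$.

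For the upper tail, Markov applied to $e^{tX}$ with $t>0$ gives
\[
\mathbb P\big(X \ge (1+\varepsilon)\mu\big) \le e^{-t(1+\varepsilon)\mu}\,\mathbb E[e^{tX}] \le \exp\!\Big(\mu\big(e^t - 1 - t(1+\varepsilon)\big)\Big),
\]
and minimizing the exponent in $t$ (the optimum is $t = \log(1+\varepsilon)>0$) gives the familiar form $\mathbb P\big(X \ge (1+\varepsilon)\mu\big) \le \exp\big(\mu\, g_+(\varepsilon)\big)$ with $g_+(\varepsilon) := \varepsilon - (1+\varepsilon)\log(1+\varepsilon)$. The lower tail is symmetric: for $0<\varepsilon<1$ and $t = \log(1-\varepsilon)<0$ one obtains $\mathbb P\big(X \le (1-\varepsilon)\mu\big) \le \exp\big(\mu\, g_-(\varepsilon)\big)$ with $g_-(\varepsilon) := -\varepsilon - (1-\varepsilon)\log(1-\varepsilon)$, while for $\varepsilon \ge 1$ the lower-tail event is essentially $\{X=0\}$ or empty and is bounded trivially.

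It then remains to verify the elementary scalar inequalities $g_+(\varepsilon) \le -\varepsilon^2/3$ for $0<\varepsilon \le 3/2$ and $g_-(\varepsilon) \le -\varepsilon^2/2 \le -\varepsilon^2/3$ for $0<\varepsilon<1$; each follows by noting the relevant function vanishes at $\varepsilon=0$ and examining its derivative, or equivalently by comparing the power series of $(1\pm\varepsilon)\log(1\pm\varepsilon)$ against a quadratic. A union bound over the two tails then contributes the prefactor $2$ and the exponent $-\varepsilon^2\mu/3$, completing the argument.

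The main obstacle is exactly that last analytic step, and specifically the upper-tail inequality $(1+\varepsilon)\log(1+\varepsilon) - \varepsilon \ge \varepsilon^2/3$: unlike the lower tail, the naive derivative comparison fails on part of $[0,3/2]$, so one must use the lead that the difference builds up near $0$ and then check that it stays nonnegative at the right endpoint $\varepsilon = 3/2$ — which is precisely why the statement restricts to $\varepsilon \le 3/2$, the constant $1/3$ being false for larger $\varepsilon$ (e.g.\ at $\varepsilon = 2$). For this routine but slightly delicate computation the paper simply invokes the standard reference \cite[Corollary 2.3]{Random_graphs} rather than reproducing it.
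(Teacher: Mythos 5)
Your argument is correct, including the one genuinely delicate point: the upper-tail inequality $(1+\varepsilon)\log(1+\varepsilon)-\varepsilon\ge \varepsilon^2/3$ is not monotone-friendly on all of $[0,3/2]$ and must be checked at the endpoint, which is exactly why the hypothesis $\varepsilon\le 3/2$ appears. The paper itself offers no proof of this lemma --- it is imported verbatim from \cite[Corollary~2.3]{Random_graphs} --- and your exponential-moment/union-bound derivation is precisely the standard argument given in that reference, so there is nothing to compare beyond noting that you have supplied the proof the paper delegates to a citation.
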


\begin{prop}
Let $C$ be the maximal circularity and $w = kn$ with $0<k<1$ fixed. Then, $\mathbb E C(W_{n, w})\to 0$ as $n\to \infty$. 
\end{prop}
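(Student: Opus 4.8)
The plan is to bound the circular weight of a maximal decomposition by the total weight carried by edges lying on a directed cycle, and then to show that this quantity has bounded expectation while the total weight $w=kn$ grows linearly in $n$.

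First I would use the following reduction. Writing $(\mathbf c,\mathbf d)$ for a maximal decomposition, the circular part $\mathbf c$ is a circular subnetwork of $\mathbf w$, hence a sum of balanced cycles \cite{cycle_decompositions}; therefore $c_{ij}>0$ forces the edge $ij$ to lie on a directed cycle of $\mathbf w$, and since $c_{ij}\le w_{ij}$,
\[
C(W_{n,w})\;=\;\frac{1}{w}\sum_{ij\in\mathcal E_{\mathbf w}}c_{ij}\;\le\;\frac{Z}{w},\qquad Z:=\!\!\!\sum_{\substack{ij\in\mathcal E_{\mathbf w}\\ ij\ \text{on a cycle}}}\!\!\! w_{ij}.
\]
So it suffices to prove $\mathbb E Z=O(1)$, since then $\mathbb E C(W_{n,w})\le \mathbb E Z/(kn)\to 0$ as $n\to\infty$. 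Writing each present weight as $w_{ij}=1+(w_{ij}-1)$ gives $Z\le N+Y$, where $N=|\{ij\in\mathcal E_{\mathbf w}: ij\text{ on a cycle}\}|$ and $Y=\sum_{ij:\,w_{ij}\ge 2}w_{ij}$ is precisely the quantity of Lemma~\ref{lem:total_weight>1}, whose expectation is asymptotically at most $k^2$. This absorbs the excess weight of multiply-sampled edges and reduces the problem to bounding $\mathbb E N$.

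For $\mathbb E N$ I would run the cycle-counting first-moment argument referred to just before the statement, now keeping track of weight. Since the paper's cycles are simple, every edge counted by $N$ lies on a simple directed cycle, so $N\le\sum_{i=2}^{n} i\,X_i$, with $X_i=X_i(W_{n,w})$ the number of simple directed $i$-cycles. For a fixed simple $i$-cycle $\gamma$ in the complete digraph on the $n$ nodes, bounding the probability that all $i$ of its edges appear by the expected number of ways to assign distinct samples to its edges so that each assigned sample realises its edge gives
\[
\mathbb P[\gamma\subseteq\mathbf w]\;\le\;\binom{w}{i}\,i!\,l^{-i}\;\le\;\Big(\tfrac{w}{l}\Big)^{i}=\Big(\tfrac{k}{n-1}\Big)^{i},
\]
using $l=n(n-1)$; and there are $\tfrac1i\,n(n-1)\cdots(n-i+1)\le \tfrac1i\,n(n-1)^{i-1}$ such cycles. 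Hence $\mathbb E X_i\le \tfrac{n}{n-1}\cdot\tfrac{k^i}{i}\le \tfrac{2k^i}{i}$ for $n\ge 2$, so that $\mathbb E N\le \sum_{i\ge 2} i\cdot\tfrac{2k^i}{i}=2\sum_{i\ge 2}k^i=\tfrac{2k^2}{1-k}$, a constant.

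Putting the pieces together, $\mathbb E Z\le \tfrac{2k^2}{1-k}+\mathbb E Y=O(1)$, whence $\mathbb E C(W_{n,w})\le \mathbb E Z/(kn)\to 0$. The hypothesis $0<k<1$ enters at exactly one point — the convergence of the geometric series $\sum_{i\ge 2}k^i$ — and this is the crux of the proof: it is the step that breaks down for $k\ge 1$, mirroring the phase transition in the underlying directed Erdős–Rényi model. The remaining ingredients — the bound $N\le\sum_i i\,X_i$, the relation of $\mathbf c$ to cycles, and the elementary first-moment estimates — are routine bookkeeping.
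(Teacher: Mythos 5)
Your proof is correct, and it follows the same overall architecture as the paper's: bound $wC$ by $Z$, the total weight on edges lying on cycles; split off the excess weight $Y$ of multiply-sampled edges via Lemma~\ref{lem:total_weight>1}; and control the remaining unit-weight contribution by a first-moment count $\sum_i i\,\mathbb E X_i$ whose geometric tail converges precisely because $k<1$. Where you genuinely diverge is in how $\mathbb E X_i$ is estimated. The paper computes $\mathbb E X_i$ in the binomial model $G^{n,p}$ (where independence makes the calculation immediate) and then transfers the bound to the uniform model via Chernoff concentration of the edge count together with monotonicity of $X$ in the edge set --- this is the $\varepsilon$, $\lambda$, $\tilde k$ bookkeeping occupying the second half of the paper's proof. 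You instead bound $\mathbb P[\gamma\subseteq\mathbf w]$ directly in the sampling-with-replacement model $W_{n,w}$ by a union bound over injective assignments of samples to the $i$ edges of $\gamma$, giving $\binom{w}{i}i!\,l^{-i}\le (w/l)^i$, which reproduces the same bound $\mathbb E X_i\le 2k^i/i$ with no model transfer at all. This is a clean simplification: it removes the dependence on Chernoff's bound for this proposition entirely and yields a transparent non-asymptotic constant $2k^2/(1-k)+\mathbb E Y$ for $\mathbb E Z$. The only price is that the argument is specific to the with-replacement ensemble, whereas the paper's transfer machinery is reused in spirit for the dense-regime proposition; but for the statement at hand your route is shorter and equally rigorous.
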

\begin{proof}
Note that the unnormalised circularity, $wC$, is bounded above by  the sum of the weights of edges that are contained in cycles, which we denote by  $Z$. Then, if we let $Y$ be the sum of the edges weights bigger than 1, we have
\begin{equation}
\mathbb E C(W_{n, w}) \leq \frac{\mathbb E Z(W_{n, w})}{w} \leq  \frac{\mathbb E Z(G_{n, w})}{w}  + \frac{\mathbb E Y(W_{n, w})}{w},
\label{eq:EC_max}
\end{equation}
where  $Z$ is computed on unweighted networks by assuming their weights are 1.  
Now by lemma~\ref{lem:total_weight>1} the term on the right tends to 0 as $n\to \infty$ so we focus on bounding $\mathbb E Z(G_{n, w})$. Consider now  $X = \sum_{i=2}^n i X_i $, where $X_i$ is the number of cycles of length $i$. Clearly $ Z(G_{n, w})\leq X (G_{n, w})$ 
and as before it is easier to study $X (G^{n, p})$ with $p= \frac{w}{l} =\frac{k}{n-1}$. We have, 
\begin{equation}
\mathbb E X(G^{n, p}) =  \sum_{i=2}^n i\mathbb  E X_i(G^{n, p}) \leq \sum_{i=2}^n 2k^i \leq \frac{2 k^2}{1-k}
\label{eq:EX(G)}
\end{equation}
where we have used \eqref{eq:EX_i(G)} and $0<k<1$. 

We now convert the result from  $G^{n, p}$ to $G_{n, w}$. To this end we note that in general if $G$ a network with distribution $G^{n, q}$, $\mathcal P$ is a network property and denoting by $E$ the number of edges we have  
\[\mathbb P  \big (G \in \mathcal P \;|\; E(G ) = m \big ) = \mathbb P (G_{n,m} \in \mathcal P). \]
Then, by expanding with conditional expectation we get
\begin{equation}
\mathbb E X(G^{n, p}) = \sum_{m= 0}^l \mathbb P \big (E(G^{n, p})= m \big ) \mathbb E X(G_{n,m}).
\label{eq:sum_P(E=m)}
\end{equation} 
Now note that $E (G^{n,p})$ is a binomial random variable with $l$ experiments and probability $p$. Thus, $\mathbb E E (G^{n,p}) = lp = w$ and given $0<\varepsilon<1$ fixed we have  from lemma~\ref{lem:chrenoff}
\[ \mathbb P\left ( |E (G^{n,p})-w| < \varepsilon w \right ) \geq 1- 2 \exp   \left (- \frac{\varepsilon^2 w}{3}\right ) \longrightarrow 1\]
as $n\to \infty$, since then $w\to \infty$. Thus, given  $0<\lambda <1$ fixed, for all $n$ large enough
\[\mathbb E X(G^{n, p}) \geq \lambda  \mathbb E X(G_{n,(1-\varepsilon)w}),\]
where we have used \eqref{eq:sum_P(E=m)} and that $X$ can only increase when adding edges. 
Note that  $\tilde k = \frac{k}{1-\varepsilon}<1$ if $\varepsilon$ is chosen small enough and then for $n$ large enough  
\[\mathbb E X (G_{n,w}) \leq \frac{1}{\lambda}\mathbb E X\Big (G^{n,\frac{\tilde k}{n-1}} \Big ) \leq \frac{2\tilde k^2}{\lambda(1-\tilde k)}  =  \frac{2k^2}{(1-\varepsilon)\lambda(1-\varepsilon-k)} \]
where in the second inequality we use \eqref{eq:EX(G)}.
Although it is needed here, note that as $\varepsilon$ and $1-\lambda$ can be chosen arbitrarily small, we obtain the asymptotic bound $2k^2/(1-k)$. As this value is finite we can conclude from \eqref{eq:EC_max} that the circularity converges to 0 as  $n\to \infty$.  
\end{proof}

We now find that for a dense enough regime the network circularity will tend to 1. The basic idea is that by the law of large numbers  all possible edges will have roughly the same large weight. There will be some variability between edge weights but  most of the network's weight will be contained in a (balanced) complete graph. Formally we have

\begin{prop}
Let $C$ be the minimal circularity on $W_{n,w}$ with $w = f(n)n^2 log(n)$ for any positive function $f$ such that $f(n) \to \infty$ when $n\to \infty$. Then, $\mathbb E C\to 1$ as $n\to \infty$.   
\end{prop}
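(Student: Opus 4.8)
The plan is to show that the random network $W_{n,w}$ with $w = f(n)n^2\log n$ asymptotically almost surely contains a large balanced complete subgraph carrying almost all of its total weight, so that the minimal circularity $C$ (i.e.\ $1$ minus the maximal directedness) tends to $1$. First I would set up the relevant quantities: each of the $l = n(n-1)$ possible edges $ij$ receives weight $w_{ij}$ equal to the number of times it was drawn among $w$ i.i.d.\ uniform samples, so $w_{ij}$ is $\mathrm{Binomial}(w, 1/l)$ with mean $\mu := w/l = f(n)\log n \cdot \tfrac{n^2}{n(n-1)} \sim f(n)\log n \to \infty$. The key concentration step is to apply Chernoff's bound (lemma~\ref{lem:chrenoff}) to each $w_{ij}$ with a fixed $\varepsilon \in (0,1)$: the failure probability for a single edge is at most $2\exp(-\varepsilon^2 \mu / 3)$, and since $\mu \gtrsim f(n)\log n$ with $f(n)\to\infty$, a union bound over all $l < n^2$ edges gives total failure probability at most $2n^2 \exp(-\varepsilon^2 f(n)\log n / 3) = 2 n^{2 - \varepsilon^2 f(n)/3} \to 0$. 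Hence a.a.s.\ \emph{every} edge satisfies $(1-\varepsilon)\mu \le w_{ij} \le (1+\varepsilon)\mu$.

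On this high-probability event, consider the balanced complete digraph $\mathbf c$ on all $n$ nodes with every edge weight set to $m_{\min} := \min_{ij} w_{ij} \ge (1-\varepsilon)\mu$. This $\mathbf c$ is circular (every node has in-weight $(n-1)m_{\min}$ equal to its out-weight), it is a subnetwork of $\mathbf w$ since $c_{ij} = m_{\min} \le w_{ij}$ for all $ij$, so $(\mathbf c, \mathbf w - \mathbf c)$ is a valid CDFD. Its circular weight is $l\, m_{\min} \ge l(1-\varepsilon)\mu = (1-\varepsilon)w$, while the total weight is exactly $w$. Therefore the circularity of this particular decomposition is at least $1-\varepsilon$, and since $C$ is the \emph{maximal} circularity over all decompositions, $C \ge 1-\varepsilon$ on this event. (Here I would double-check the direction: the proposition statement says "minimal circularity", but the minimal circularity is bounded below by $1$ minus the total directedness; in fact on this event \emph{every} decomposition has small directional part, because the nett flow of $\mathbf w$ is itself small — see below — so both the minimal and maximal circularity are $\ge 1-\varepsilon$.)

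To handle the minimal circularity cleanly I would argue that the directional part of \emph{any} CDFD has total weight bounded by the total weight of \emph{any} fixed directional part, in particular by that of the maximal-circularity solution; but more directly, the nett in-minus-out flow at each node, $|w_i^{\ind} - w_i^{\outd}|$, concentrates. Writing $w_i^{\ind} - w_i^{\outd} = \sum_{j\ne i}(w_{ji} - w_{ij})$, this is a sum of $2(n-1)$ (dependent, but exchangeable) contributions each of mean $0$ and the whole expression has variance $O(n\mu)$; a Chebyshev or Bernstein bound plus union over $n$ nodes shows $\max_i |w_i^{\ind}-w_i^{\outd}| = o(w/n)$ a.a.s., so the total nett flow is $o(w)$. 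Since the directional part of any CDFD must carry at least the nett flow (it is a feasible flow transporting these imbalances, and by \eqref{eq:x_constrains} its total weight is at least $\sum_i \max(w_i^{\ind}-w_i^{\outd},0)$, but can be taken close to it by the maximal-circularity construction above), we get that on the good event the directional part has total weight $O(\varepsilon w + o(w))$ for the extremal decomposition, giving $C \ge 1 - \varepsilon - o(1)$ regardless of which endpoint of the circularity range one takes. Letting $\varepsilon \to 0$ after $n\to\infty$, and noting $\mathbb E C \to 1$ follows from $C \le 1$ always and $C \to 1$ in probability (bounded convergence), completes the proof.

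\textbf{Main obstacle.} The delicate point is \emph{not} the upper regime concentration — Chernoff plus union bound handles that once one checks that $\mu = w/l$ genuinely diverges fast enough to beat the $n^2$ edges, which is exactly what the hypothesis $w = f(n)n^2\log n$ with $f\to\infty$ guarantees. The real care is needed in reconciling the proposition's wording ("minimal circularity") with the construction: one must verify that the \emph{lower} end of the circularity range is also forced up to $1$, which requires the separate (and mildly technical) estimate that the network's total nett flow is $o(w)$ a.a.s., handling the dependence among the $w_{ij}$ sharing a node. A clean way around this is to observe that $w = l\mu$ with $\mu\to\infty$ makes $\mathbf w$ itself "almost balanced", so the whole decomposition space is squeezed near the circular extreme; formalising "almost balanced $\Rightarrow$ circularity range $\subset [1-\varepsilon,1]$" via \eqref{eq:x_constrains} is the one genuinely new lemma the proof needs.
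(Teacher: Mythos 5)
Your first two steps are sound and match the paper's setup: Chernoff (lemma~\ref{lem:chrenoff}) plus a union bound over the $l<n^2$ edges shows that with high probability every $w_{ij}$ lies in $[(1-\varepsilon)w/l,(1+\varepsilon)w/l]$, and the balanced complete digraph at level $\min_{ij}w_{ij}$ then certifies that the \emph{maximal} circularity is at least $1-\varepsilon$. The genuine gap is exactly at the point you flag, and it cannot be closed by the nett-flow estimate you sketch. The proposition concerns the \emph{minimal} circularity, so you must upper-bound $\sum_{ij}d_{ij}$ for \emph{every} acyclic directional part $\mathbf d$ satisfying \eqref{eq:x_constrains}. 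The inference ``the total nett imbalance $T=\sum_i(w_i^{\outd}-w_i^{\ind})^+$ is $o(w)$, hence every directional part has weight $o(w)$'' is false as a general principle: an acyclic flow can route a small imbalance along long paths, so each unit of imbalance may appear on up to $n-1$ edges. Quantitatively, if $\pi$ is a topological order of $\mathbf d$, the flow across the cut after position $k$ equals $\sum_{i:\pi(i)\le k}(w_i^{\outd}-w_i^{\ind})\le T$, and summing over cuts gives only $\sum_{ij}d_{ij}\le\sum_{ij}d_{ij}(\pi(j)-\pi(i))\le (n-1)T$. In your model $T\asymp n\sqrt{w/n}=\sqrt{wn}$, so this bound is $\asymp\sqrt{wn^3}$, which \emph{exceeds} $w$ whenever $w\ll n^3$ --- i.e.\ for most admissible choices of $f$ (e.g.\ $f(n)=\log n$). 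So the ``one genuinely new lemma'' you defer is the entire content of the hard direction, and it cannot be proved from smallness of the nett flow alone; one must use the edge capacities $d_{ij}\le w_{ij}\approx w/l$ together with the divergence constraint. (Also, the sentence claiming the directional part of any CDFD is bounded by that of the maximal-circularity solution is backwards: that solution has the \emph{smallest} directional part, so it bounds the others from below.)

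For comparison, the paper does not reason about nett flows at all: it argues deterministically that once $a=\min_{ij\in\mathcal L}w_{ij}>0$ --- which your concentration step supplies with $a\ge(1-\varepsilon)w/l$ --- every circular part retains a complete balanced digraph at level $a$, so that $\sum_{ij}c_{ij}\ge la\ge(1-\varepsilon)w$ holds for all decompositions simultaneously, and then only the single random variable $\min_{ij}X_{ij}$ needs to be controlled. That is a structural statement about what an acyclic part is allowed to remove from a network in which every ordered pair is present (for instance, acyclicity immediately forbids $d_{ij}>0$ and $d_{ji}>0$ together, which already forces $\sum_{ij}c_{ij}\ge la/2$), not a statement about how much flow the acyclic part must transport. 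Your proposal replaces this deterministic step with an imbalance-concentration argument that does not suffice in the stated regime, so as written the proof of the minimal-circularity bound is incomplete.
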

\begin{proof}
Note that if   $\mathbf w$ has  $a = \min_{ij\in \mathcal L } w_{ij}\not = 0$ then all circular parts will consist of a complete graph with all weight equal or larger to $a$.
Thus, if we let $X_{ij}$ be the weight of the edge $ij\in \mathcal L$ on $W_{n,w}$, we have 
\begin{equation}
\begin{split}
\mathbb E C &\geq  \frac{l}{w} \mathbb E \left ( \min_{ij\in \mathcal L} X_{ij} \right ) \geq \frac{l \lambda}{w} \mathbb P \left ( \min_{ij\in \mathcal L} X_{ij}\geq \lambda \right) = \frac{l \lambda}{w} \big (1- \mathbb P \left (  X_{ij}< \lambda \textnormal{ for some }  ij\in \mathcal L \; \right)\big )\\
&\geq \frac{l \lambda}{w} \big (1-l \mathbb P \left (  X_{12}< \lambda \right)\big ),
\end{split}
\label{eq:ECmin}
\end{equation}
for arbitrary  $\lambda$. Note that $X_{12}$ follows a binomial distribution with $w$ samples and probability $1/l$. Let $\varepsilon = \sqrt{l/w} \log^{1/2}(l^6)$ so $\varepsilon^2 = \frac{6l\log(l)}{w} \leq \frac{12}{f(n)} \to 0$ as $n\to \infty$ and we can apply lemma~\ref{lem:chrenoff} to $X_{12}$ for all $n$ large enough. 
Taking $\lambda = (1-\varepsilon) \mathbb E X_{12} $, where $\mathbb E X_{12}= w/l$, we have
\[ \mathbb P(X_{12}<\lambda ) \leq \mathbb P \left ( |X_{12}- \mathbb E X_{12}| \geq \varepsilon\mathbb E X_{12} \right ) \leq 2 \exp \left (\frac{-\varepsilon^2 w}{3 l}\right ) = \frac{2}{l^2}, \] 
for all $n$ large enough.
Then, from \eqref{eq:ECmin} we obtain 
\[\mathbb E C\geq \frac{l }{w} \frac{(1-\varepsilon)w}{l} \left (1-l \frac{2}{l^2}\right) = (1-\varepsilon)\left (1-\frac{2}{l} \right ) \to 1\]
as $n\to \infty$.

\end{proof}

We note that the equivalent result for the unweighted models are trivial as if $m = l $ and $p = 1$ then  $G_{n,m}$ and $G^{n,p}$ are always complete unweighted graphs and thus completely circular. One could however try to improve upon these and previous bounds.  

\end{document}